\let\mathcal\EuScript
\newcommand{\fl}{\hspace*{-\mathindent}}
\newcommand{\PDE}{\textsc{pde}}
\newcounter{example_counter}
\newcounter{definition_counter}
\newcounter{theorem_counter}
\newcounter{remark_counter}
\newcounter{proposition_counter}
\newcounter{lemma_counter}
\theoremstyle{theorem}
\newtheorem{proposition}{\textsc{Proposition}}
\newtheorem{lemma}{\textsc{Lemma}}
\theoremstyle{definition}
\theoremstyle{remark}
\newtheorem{remark}{\textsc{Remark}}
\DeclareMathOperator{\Der}{Der}
\DeclareMathOperator{\rank}{rank}
\DeclareMathOperator{\dC}{\mathbf{X}}
\DeclareMathOperator{\sym}{sym}
\DeclareMathOperator{\Ev}{\mathbf{E}}
\newcommand{\ldb}{[\![}
\newcommand{\rdb}{]\!]}
\let\phi=\varphi
\begin{document}
\title[Lagrangian extensions of multi-dimensional
equations.]{Lagrangian extensions of multi-dimensional integrable
  equations. I. The five-dimensional Mart{\'{\i}}nez Alonso--Shabat
  equation.}

\author{I.S. Krasil'shchik} \address{Trapeznikov Institute of
  Control Sciences
  \\
  65 Profsoyuznaya Street, Moscow 117997, Russia} \email{josephkra@gmail.com}

\author{O.I. Morozov} \address{Trapeznikov Institute of Control Sciences
  \\
  65 Profsoyuznaya Street, Moscow 117997, Russia} \email{oimorozov@gmail.com}

\subjclass[2020]{58H05, 58J70, 35A30, 37K05, 37K10}

\date{\today
}


\keywords{5D Mart{\'{\i}}nez Alonso--Shabat equation, cotangent covering,
  symmetries, algebras of Kac--Moody type, recursion operators}

\begin{abstract}
  We study a Lagrangian extension of the 5d Mart{\'{\i}}nez Alonso--Shabat
  equation~$\mathcal{E}$
  \begin{equation*}
     u_{yz}=u_{tx}+u_y\,u_{xs}-u_x\,u_{ys}
   \end{equation*}
   that coincides with the cotangent equation~$\mathcal{T^*E}$ to the
   latter. We describe the Lie algebra structure of its symmetries (which
   happens to be quite nontrivial and is described in terms of deformations)
   and construct two families of recursion operators for
   symmetries. Each family depends on two parameters. 
We prove that all the operators from the first family are hereditary, 
but not    compatible in the sense of the Nijenhuis bracket. We also construct two new
   parametric Lax pairs that depend on higher-order derivatives of the
   unknown functions.
\end{abstract}
\maketitle
\tableofcontents
\newpage

\section{Introduction}\label{sec:introduction}
In what follows, we consider the two-component system
\begin{equation}
  \label{eq:1}
  \begin{array}{rcl}
    u_{yz}&=&u_{tx}+u_y\,u_{xs}-u_x\,u_{ys},\\
    v_{yz} &=& v_{tx}+u_y\,v_{xs}-u_x\,v_{ys}+2\,u_{ys}\,v_x-2\,u_{xs}\,v_y,
  \end{array}
\end{equation}
which is nothing else but the cotangent equation (see,
e.g.,~\cite{KrasilshchikVerbovetskiyVitolo2017} and references therein) to the
$5$-dimensional  Mart{\'{\i}}nez Alonso--Shabat equation
\begin{equation}
  \label{eq:2}
  u_{yz} = u_{tx}+u_y\,u_{xs}-u_x\,u_{ys},
\end{equation}
see~\cite{MartinezAlonsoShabat2004, ManakovSantini2006, ManakovSantini2014,
BKMV2015}.
We describe the Lie algebra structure of its symmetries  and construct two families of 
recursion operators for  symmetries. Each family depends on two parameters. 
It is proved that all the operators from the first family are hereditary, but not    
compatible in the sense of the Nijenhuis bracket. We also construct two new
parametric Lax pairs that depend on higher-order derivatives of the
unknown functions.  

Equation \eqref{eq:2} admits a number of symmetry reductions. The
4-dimensional reductions include the reduced quasi-classical self-dual
Yang--Mills equation~\cite{FerapontovKhusnutdinova2004}
\begin{equation}
  u_{yz} = u_{tx}+u_y\,u_{xx}-u_x\,u_{xy},
\label{FKh4}
\end{equation}
the four-dimensional Mart{\'{\i}}nez Alonso--Shabat equation
\cite{MartinezAlonsoShabat2004}
\begin{equation}
  u_{ty} = u_z\,u_{xy} - u_y\,u_{xz},
\label{MASh4}
\end{equation}
and the 4D  universal hierarchy equation  \cite{BogdanovPavlov2017}   
\begin{equation}
  u_{zz} = u_{tx} + u_z u_{xy} - u_x u_{yz}.
\label{4D_UHE}
\end{equation}
In their turn, these equations admit 3D symmetry reductions: 
the hyper-CR equation for Ein\-stein--Weyl structures
 \cite{Kuzmina1967,Mikhalev1992,Pavlov2003,Dunajski2004}
\begin{equation}
  u_{yy} = u_{tx} + u_y\,u_{xx} - u_x\,u_{xy},
\label{Pavlov_eq}
\end{equation} 
the 3D rdDym equation \cite{Blaszak2002}
\begin{equation}
  u_{ty} = u_x\,u_{xy} - u_y \,u_{xx},
\label{Oe}
\end{equation}
the 3D universal hierarchy equation 
\cite{MartinezAlonsoShabat2002, MartinezAlonsoShabat2004}
\begin{equation}
  u_{yy} = u_y\,u_{tx} - u_x \,u_{ty},
\label{uhe}
\end{equation}
and the modified Veronese web equation
\cite{AdlerShabat2007,FerapontovMoss2015}
\begin{equation}
  u_{ty} = u_t\,u_{xy} - u_y \,u_{tx}.
\label{mVwe}
\end{equation}
The Lax representation \eqref{extended_MASh5D_shadow_generated_covering} as
well as the recursion operator \eqref{lambda_recursion_operator_for_MASh5D},
\eqref{lambda_recursion_operator_for_adjoint_linearized_MASh5D} survive in
these reductions when $\kappa=\mu=0$ and thus provide Lax representations and
recursion operators for the cotangent extensions of the above reduced
equations. The questions of whether these cotangent extensions possess
parametric families of Lax representations and recursion operators, and
whether the recursion operators are hereditary, are more subtle. The
affirmative answer to the second question for Eqs.~\eqref{Oe} and
\eqref{Pavlov_eq} was obtained in
\cite{KrasilshchikVerbovetskiy2022,Krasilshchik2022}.  The Lax representation
of Eq. \eqref{MASh4} with two non-removable parameters \cite{Morozov2014} was
used to construct new recursion operators for this equation in a recent
preprint~\cite{Vojcak-2022}.

The other reductions of Eq. \eqref{MASh5D} as well as some other
multi-dimensional integrable \PDE\ will be considered in the forthcoming
parts of the work.

In Section~\ref{sec:prel-notat}, we very briefly discuss the approach adopted
in the study. A more detailed information can be found in the
monographs~\cite{Bocharov1999, KrasilshchikVerbovetskiyVitolo2017} and
in~\cite{Krasilshchik2022}. Section~\ref{sec:5d-martinez-alonso} contains the
main results. Namely, Subsection~\ref{sec:symmetries} deals with a full
description of the Lie algebra of symmetries of Eq.~\eqref{eq:1}, which
consists not only of point symmetries (a typical situation for
multi-dimensional systems), but contains one higher (of order~$3$) one.  In
Subsection~\ref{sec:lax-representations-1}, we find a two-parameter Lax pair
for System~\eqref{eq:1} which is used to construct two families of recursion
operators. We show that the operators from the first family are hereditary,
but pair-wise incompatible in the sense of the Nijenhuis bracket. The action
of the operators from the first family on symmetries is described in
Subsection~\ref{sec:actions}. In Subsection~\ref{sec:higher-order-lax}, using
specific properties of the above-mentioned higher symmetry, we construct two
families of Lax pairs that depend on variables of order~$5$ and $4$. Finally,
in Section~\ref{sec:concluding-remarks}, we discuss some the perspectives.

\section{Preliminaries and notation}\label{sec:prel-notat}

Our goal is to study various invariants (symmetry algebras, Lax pairs,
recursion operators, etc.) of canonical Lagrangian extensions associated to
multi-dimensional systems. The approach adopted here is based on the
geometrical theory of \PDE s (see~\cite{Bocharov1999} for the main definitions
and standard notation and~\cite{KrasilshchikVinogradov1989} for the details of
the nonlocal theory).

Namely, any \PDE\ (its infinite prolongation, to be more precise) is treated
as a submanifold $\mathcal{E}\subset J^\infty(\pi)$, where~$J^\infty(\pi)$ is
the space of infinite jets of some locally trivial vector bundle
$\pi\colon E\to M$, $\dim M = n$, $\rank\pi = m$. There exists a natural
projection $\pi_\infty\colon \mathcal{E}\to M$ and~$\mathcal{E}$ is always
endowed with an integrable $n$-dimensional $\pi_\infty$-horizontal
distribution $\mathcal{C}\subset T\mathcal{E}$ (the Cartan
distribution)\footnote{Since $\mathcal{C}$ is horizontal and $n$-dimensional,
  a flat connection in~$\pi_\infty$ is associated to this distribution (the
  Cartan connection).}. Locally,~$\mathcal{C}$ annihilates all the Cartan
forms $\omega_\sigma^j = du_\sigma^j - \sum_i u_{\sigma i}^j\,dx^i$.

A $\pi_\infty$-vertical vector field~$Z$ on~$\mathcal{E}$ is a symmetry if
$[Z, \mathcal{C}]\subset \mathcal{C}$. Any such a~$Z$ has the form of an
evolutionary derivation~$\Ev_\phi$, where~$\phi$ is the generating section
of~$Z$; we do not distinguish between~$Z$ and~$\phi$. The space of
symmetries~$\sym\mathcal{E}$ carries the structure of a Lie algebra with
respect to the commutator. The corresponding bracket on generating sections is
denoted by~$\{\cdot\,,\cdot\}$ and is called the Jacobi
bracket. Let~$\mathcal{E}$ be given by the system
$\{F = 0,\ F = (F^1, \dots, F^r)\}$. To find symmetries, one needs to solve
the linear system
\begin{equation*}
  \ell_{\mathcal{E}}(\phi) = 0,
\end{equation*}
where~$\ell_{\mathcal{E}}$ is the restriction of the linearization
operator~$\ell_F$ to~$\mathcal{E}$.

Consider an overdetermined \PDE~$\mathcal{W}$ whose compatibility conditions
coincide with~$\mathcal{E}$. Then the surjection $\tau\colon \mathcal{W} \to
\mathcal{E}$ is a covering. Coordinates in the fiber of~$\tau$ are called
nonlocal variables. If the defining equations of~$\mathcal{W}$ are linear in
nonlocal variable, this covering is a Lax pair for~$\mathcal{E}$. An
$\mathbb{R}$-linear derivation $S\colon C^\infty(\mathcal{E}) \to
C^\infty(\mathcal{W})$ is a nonlocal $\tau$-shadow if it preserves Cartan
distributions (commutes with the action of Cartan connections). The defining
equation for a shadow~$S = \tilde{\Ev}_\phi$ is
\begin{equation*}
  \tilde{\ell}_{\mathcal{E}}(\phi) = 0,
\end{equation*}
where $\tilde{\ell}_{\mathcal{E}}$ is the lift of~$\ell_{\mathcal{E}}$
from~$\mathcal{E}$ to~$\mathcal{W}$, while~$\phi$ is the $m$-component
generating section of~$S$ that lives on~$\mathcal{W}$.

Let $\tau_\lambda\colon \mathcal{W}_\lambda \to \mathcal{E}$ be a
$\lambda$-parametrized family of coverings, $\lambda\in\mathbb{R}$. The
parameter~$\lambda$ is said to be non-removable if the
coverings~$\tau_\lambda$ are pair-wise non-equivalent. There exists a regular
way to insert a parameter in an arbitrary covering,
see~\cite{KrasilshchikVinogradov1989}. Namely, let~$Z$ be an integrable
symmetry of~$\mathcal{E}$, i.e., such that it possesses local
trajectories. Assume that~$Z$ cannot be lifted to~$\mathcal{W}$, i.e., there
exists no symmetry~$\tilde{Z}$ of~$\mathcal{W}$ such that
$\tau_*(\tilde{Z}) = Z$. Then $\exp(\lambda Z)$ generates the desired family
of coverings.

The following construction underlies our approach to recursion operators for
symmetries of~$\mathcal{E}$. Introduce the bundle
$\mathbf{t}\colon \mathcal{TE} = T\mathcal{E}/\mathcal{C} \to M$ which is
called the tangent covering of~$\mathcal{E}$. In coordinates, if~$\mathcal{E}$
is given by $\{F = 0\}$, then~$\mathcal{TE}$ is defined by the system
\begin{equation*}
  F(u) = 0, \qquad \ell_{\mathcal{E}}(q) = 0,
\end{equation*}
where $q = (q^1,\dots,q^m)$ is a new unknown which is assumed to be odd (of
parity~$1$). The algebra of super-functions on~$\mathcal{TE}$ is identified
with the Grassmann algebra~$\Lambda_v^*(\mathcal{E})$ of Cartan forms
on~$\mathcal{E}$. The Cartan differential
\begin{equation*}
  d_{\mathcal{C}}\colon \Lambda_v^i(\mathcal{E})\to
  \Lambda_v^{i+1}(\mathcal{E}),\qquad d_{\mathcal{C}}(f) = \sum\frac{\partial
    f}{\partial u_\sigma^j}\omega_\sigma^j,\quad
  d_{\mathcal{C}}(\omega_\sigma^j) = 0,
\end{equation*}
defines a canonical nilpotent vector field~$\dC$ on~$\mathcal{TE}$ of
parity~$1$. Sections of~$\mathbf{t}$ that preserve the Cartan distributions
coincide with symmetries of~$\mathcal{E}$.

Let~$\tau\colon \mathcal{W}\to \mathcal{TE}$ be a covering with fibers of
parity~$1$. Let also~$S$ be a $\mathbf{t}\circ\tau$-shadow linear in the
nonlocal variables of~$\tau$. Then it defines another covering $\tau_S\colon
\mathcal{W} \to \mathcal{TE}$ and we obtain the diagram
\begin{equation}\label{eq:3}
  \mathcal{R}\colon\qquad
  \xymatrix{\mathcal{TE}&\mathcal{W}
    \ar_-{\tau_s}[l]\ar[r]^-\tau&\mathcal{TE}\rlap{,} 
  }
\end{equation}
i.e., a recursion operator (a B\"{a}cklund auto-transformation
of~$\mathcal{TE}$), cf.~\cite{Mar-another}. We say that~$\mathcal{R}$ is a
regular operator if there exists a symmetry
$\tilde{S} = \Ev_\phi\in \sym\mathcal{W}$ that extends~$S$, i.e., such that
the restriction of~$\tilde{S}$ to~$C^\infty(\mathcal{E})$ coincides with~$S$.

Consider    two   regular    operators    of    the   form~\eqref{eq:3}    and
let~$\Ev_{\phi_1}$, $\Ev_{\phi_2}$  be the corresponding symmetries.  Then the
generating section~$\ldb\phi_1,\phi_2\rdb$ of their super-commutator
\begin{equation*}
  [\Ev_{\phi_1}, \Ev_{\phi_2}] = \Ev_{\phi_1}\circ\Ev_{\phi_2} +
  \Ev_{\phi_2}\circ\Ev_{\phi_1} 
\end{equation*}
(a symmetry of parity~$2$) is called their Nijenhuis bracket. An operator is
hereditary if~$\ldb\phi, \phi\rdb = 0$, two operators are compatible
if~$\ldb\phi_1, \phi_2\rdb = 0$.

Regularity is established in two steps: lift to~$\mathcal{TE}$ and subsequent
lift to~$\mathcal{W}$. First, using the interpretation of functions
on~$\mathcal{TE}$ as Cartan forms, we set
\begin{equation*}
  S(\omega_\sigma^j) = S(\dC(u_\sigma^j)) = -\dC(S(u_\sigma^j)).
\end{equation*}
The action of~$\dC$ on local variables is known. To compute its values at
nonlocal ones, we take the defining equations of the covering~$\tau$
\begin{equation*}
  w_{x^i}^j = W_i^j,\qquad i= 1,\dots,n.\quad j = 1,\dots,\rank\tau,
\end{equation*}
and apply~$\dC$ to them:
\begin{equation}
  \label{eq:4}
  D_{x^i}(\dC(w^j)) = \dC(W_i^j),
\end{equation}
where~$D_{x^i}$ are the total derivatives. Solving Equation~\eqref{eq:4} with
respect to~$\dC(w^j)$ (if possible) gives the desired result. In a similar
way, to find the action of~$S$ on~$w^j$, we solve the system
\begin{equation*}
  D_{x^i}(S(w^j)) = S(W_i^j)
\end{equation*}
with respect to~$S(w^j)$.

To finish this introductory part, let us say a few words about the extensions
we work with below. Let~$\mathcal{E}$ be an equation given by $\{F = 0\}$ in
local coordinates. Then the system
\begin{equation}
  \label{eq:5}
  \mathcal{T}^*\mathcal{E}\colon\qquad\ell_{\mathcal{E}}^*(p) = 0,\quad F(u) = 0,
\end{equation}
where~$\ell_{\mathcal{E}}^*$ is the adjoint operator, is called the cotangent
equation to~$\mathcal{E}$. System~\eqref{eq:5} is always a Lagrangian one with
the Lagrangian density~$\mathcal{L} = \langle p,
F\rangle\,dx^1\wedge\dots\wedge \,dx^n$. Though $\mathcal{T}^*\mathcal{E}$ is
defined in coordinates, it can be shown that when~$\mathcal{E}$ is a two-line
equation (see~\cite{Vin-C-spec}), the object is well defined,
see~\cite{KrasilshchikVerbovetskiyVitolo2017} for the proof.

\section{The 5D Mart{\'{\i}}nez Alonso--Shabat
  equation}\label{sec:5d-martinez-alonso} 

We consider the five-dimensional Mart{\'{\i}}nez Alonso--Shabat
equation~$\mathcal{E}$
\begin{equation}
  u_{yz}=u_{tx}+u_y\,u_{xs}-u_x\,u_{ys}.
\label{MASh5D}
\end{equation}
The linearization of this equation reads
\begin{equation}
  q_{yz}=q_{tx}+u_y\,q_{xs}+u_{xs}\,q_y-u_x\,q_{ys}-u_{ys}\,q_x.
\label{linearized_MASh5D}
\end{equation}
The cotangent covering
$\tilde{\mathcal{E}} = \EuScript{T^{\,*}E}\to \mathcal{E}$ (see,
e.g.,~\cite{KrasilshchikVerbovetskiyVitolo2017}) is obtained by appending the
adjoint linearization
\begin{equation}
  v_{yz} = v_{tx}+u_y\,v_{xs}-u_x\,v_{ys}+2\,u_{ys}\,v_x-2\,u_{xs}\,v_y
\label{adjoint_linearized_MASh5D}
\end{equation}
to \eqref{MASh5D}.

\subsection{Symmetries}\label{sec:symmetries}
Symmetries of the system at hand are described by the following

\begin{proposition} The Lie algebra $\mathfrak{s}$ of the local infinitesimal
  symmetries of order $\leq 3$ for System \eqref{MASh5D},
  \eqref{adjoint_linearized_MASh5D} is generated by the functions
  \begin{align*}
    \psi_1 &= (x\,u_x+y\,u_y-u, x\,v_x+y\,v_y),\\
    \psi_2 &= (-t\,u_t-x\,u_x-2\,z\,u_z-u, -t\,v_t-x\,v_x-2\,z\,v_z),\\
    \psi_3 &= (-t\,u_t+x\,u_x-y\,u_y+z\,u_z, -t\,v_t+x\,v_x-y\,v_y+z\,v_z),\\
    \psi_4 &= (-t\,z\,u_t-x\,z\,u_x-t\,x\,u_y-z^2\,u_z-z\,u,\\
           &\ \ -z\,v-x\,z\,v_x-t\,z\,v_t-t\,x\,v_y-z^2\,v_z),\\
    \psi_5 &= (-y\,u_x-t\,u_z, -y\,v_x-t\,v_z),\\
    \psi_6 &= (t^2\,u_t+y\,z\,u_x+t\,y\,u_y+t\,z\,u_z+t\,u,\\
           &\ \ t^2\,v_t +y\,z\,v_x+t\,y\,v_y+t\,z\,v_z+t\,v),\\
    \psi_7 &= (u_z, v_z),\\
    \psi_8 &= (-z\,u_t-x\,u_y, -z\,v_t-x\,v_y),\\
    \psi_9 &= (-u_t, -v_t),\\
    \psi_{10} &= (-z\,u_x-t\,u_y, -z\,v_x-t\,v_y),\\
    \psi_{11} &= (u_x, v_x),\\
    \psi_{12} &= (-u_y, -v_y),\\
    \psi_{13} &= (0, -u_{sss}),\\
    \psi_{14} &=(0,v),\\
    \varphi_{0,0}(A)&= (-Au_s+A_{s}\,u+A_{z}x+A_{t}y,
                      -A\,v_s-2\,A_{s}\,v),\\
    \varphi_{0,1}(A) &= (A,0),\\
    \varphi_{1,0}(A) &= (0, 2\,A\,u_s+A_{s}\,u+A_{z}\,x+A_{t}\,y),\\
    \varphi_{1,1}(A) &= (0,A),
  \end{align*} 
  where $A=A(t,z,s)$ and $B=B(t,z,s)$ below are arbitrary smooth functions of
  their arguments.  Besides\textup{,} the map $(t,x,y,z) \mapsto (z,y,x,t)$ is
  a discrete \textup{(}finite\textup{)} symmetry of this system.  There hold
  \begin{equation*}
    \{\varphi_{i,m}(A),\varphi_{j,n}(B)\}=
    \begin{cases}
      \varphi_{i+j,m+n}(AB_s-(1-3j)BA_s), & i+j \le 1, \\
      &m+n \le 1,  \\
      0, & \mathrm{otherwise}
    \end{cases}
  \end{equation*}
  for $i \le j$ and
  \begin{align*}
    \{\psi_1, \varphi_{i,m}(A)\}
    &= (i-m)\,\varphi_{i,m}(A),\\
    \{\psi_2, \varphi_{i,m}(A)\}
    &= \varphi_{i,m}(t\,A_t+2\,z\,A_z+(i-m)\,A),\\
    \{\psi_3, \varphi_{i,m}(A)\}
    &= \varphi_{i,m}(t\,A_t-z\,A_z),\\
    \{\psi_4, \varphi_{i,m}(A)\}
    &= \varphi_{i,m}(t^2\,A_t+z^2\,A_z+m\,z\,A),\\
    \{\psi_5, \varphi_{i,m}(A)\}
    &= \varphi_{i,m}(t\,A_z),\\
    \{\psi_6, \varphi_{i,m}(A)\}
    &= \varphi_{i,m}(-t^2\,A_t-t\,z\,A_z-m\,A),\\
    \{\psi_7, \varphi_{i,m}(A)\}
    &= \varphi_{i,m}(-z\,A_z),\\
    \{\psi_8, \varphi_{i,m}(A)\}
    &= \varphi_{i,m}(z\,A_t),\\
    \{\psi_9, \varphi_{i,m}(A)\}
    &= \varphi_{i,m}(A_t),\\
    \{\psi_{10},\varphi_{i,m}(A)\}
    &= \begin{cases}
      \varphi_{i,1}(t\,A_t+z\,A_z), & m=0,\\
      0, & \mathrm{otherwise},
    \end{cases}\\
    \{\psi_{11},\varphi_{i,m}(A)\}
    &=
      \begin{cases}
        \varphi_{i,1}(-z\,A_z), & m=0,\\
        0, & \mathrm{otherwise},
      \end{cases}\\
    \{\psi_{12},\varphi_{i,m}(A)\}
    &=
      \begin{cases}
        \varphi_{i,1}(A_t), & m=0,\\
        0, & \mathrm{otherwise},
      \end{cases}\\
    \{\psi_{13},\varphi_{i,m}(A)\}
    &=
      \begin{cases}
        \varphi_{1,m}(A_{sss}), & i=0,\\
        0, & \mathrm{otherwise},
      \end{cases}\\
    \{\psi_{14},\varphi_{i,m}(A)\}
    &=-i\,\varphi_{i,m}(A),\\
    \{\psi_i,\psi_{14}\}
    &=0, \qquad i \in \{1, ... , 12\},\\
    \{\psi_i,\psi_{13}\}
    &=
      \begin{cases}
        -\psi_{13}, & i \in \{1, 2, 14\},\\
        0, & \mathrm{otherwise}.
      \end{cases}
  \end{align*}
  The subalgebra
  $\langle \psi_1, ... , \psi_{12} \rangle \subset \mathfrak{s}$ is isomorphic
  to the Lie algebra $\mathfrak{gl}(3,\mathbb{R}) \ltimes \mathbb{R}^3$ with
  the isomorphism given by the map
  \begin{equation*}
    \fl
    \begin{array}{lcl}
      (\psi_1, ... , \psi_{12}) 
      &\mapsto  &(E_{11}+E_{22}+E_{33},
                  E_{11}-E_{22},E_{22}-E_{33},E_{12},E_{23},\\
      & & E_{13},E_{21},E_{32},E_{31},e_1,e_2,e_3),
    \end{array}
  \end{equation*}
where $E_{ij} \in \mathfrak{gl}(3,\mathbb{R})$ is the matrix with the only
non-zero $(i,j)$-entry $1$ and for the vectors $e_k \in \mathbb{R}^3$ there
hold $E_{ij}\,e_k = \delta_{jk}\,e_i$.  \hfill $\Box$
\end{proposition}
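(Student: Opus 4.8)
The proof is an algorithmic, if lengthy, computation of $\sym$ of the system \eqref{MASh5D}, \eqref{adjoint_linearized_MASh5D} in the class of generating sections of order $\le 3$, followed by a direct evaluation of all the Jacobi brackets. As internal coordinates on the infinite prolongation $\tilde{\mathcal{E}}=\mathcal{T^{\,*}E}$ I would take the derivatives $u_\sigma,v_\sigma$ whose multi-indices contain no sub-block ``$yz$'', together with $t,x,y,z,s$; since \eqref{MASh5D} and \eqref{adjoint_linearized_MASh5D} are resolved with respect to $u_{yz}$ and $v_{yz}$ respectively, all the remaining jet variables are obtained from these by repeated total differentiation, which fixes $D_t,D_x,D_y,D_z,D_s$ on $\tilde{\mathcal{E}}$.

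Because \eqref{eq:1} is the cotangent covering of $\mathcal{E}$, the linearization $\ell_{\tilde{\mathcal{E}}}$ is block triangular: a pair $(\phi,\chi)$ is a symmetry iff $\ell_{\mathcal{E}}(\phi)=0$, i.e.\ $\phi$ solves \eqref{linearized_MASh5D} on $\tilde{\mathcal{E}}$ (the first equation of \eqref{eq:1} does not involve $v$), and then $\chi$ solves the inhomogeneous equation $\ell_{\mathcal{E}}^{*}(\chi)=R(\phi)$, where $\ell_{\mathcal{E}}^{*}$ is the operator of \eqref{adjoint_linearized_MASh5D} and $R(\phi)$ gathers the $\phi$-dependent terms. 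Thus the search splits into \textup{(i)} finding all order-$\le3$ solutions $\phi$ of \eqref{linearized_MASh5D} and \textup{(ii)} for each such $\phi$, finding the general order-$\le3$ solution $\chi$ of the inhomogeneous equation, the ambiguity by a solution of the homogeneous equation $\ell_{\mathcal{E}}^{*}(\chi)=0$ accounting for the pure $v$-sector generators $\psi_{13},\psi_{14},\varphi_{1,0}(A),\varphi_{1,1}(A)$. Each of \textup{(i)}, \textup{(ii)} is carried out in the standard way: substitute an undetermined smooth function of the order-$\le3$ internal coordinates, expand, split with respect to the highest-weight jet variables, and integrate the resulting overdetermined linear system of \PDE s, with the help of computer algebra. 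Step \textup{(i)} produces $\psi_1,\dots,\psi_{12}$, the family $\varphi_{0,0}(A)$ with $A=A(t,z,s)$, the functional freedom $\varphi_{0,1}(A)=(A,0)$, and the single genuinely third-order solution behind $\psi_{13}$; step \textup{(ii)} supplies the remaining entries of the list.

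With the generators explicit, every commutation relation in the statement follows from a direct evaluation of $\{\phi_1,\phi_2\}=\Ev_{\phi_1}(\phi_2)-\Ev_{\phi_2}(\phi_1)$ on $\tilde{\mathcal{E}}$. The isomorphism $\langle\psi_1,\dots,\psi_{12}\rangle\cong\mathfrak{gl}(3,\mathbb{R})\ltimes\mathbb{R}^3$ is then the finite check that the structure constants read off from the bracket table agree with those of the matrices $E_{ij}$ and the vectors $e_k$ under the stated correspondence. Finally, the discrete symmetry $(t,x,y,z)\mapsto(z,y,x,t)$ (with $s$ fixed) is verified by substitution: it interchanges the linear terms and changes the sign of the nonlinear term, so \eqref{MASh5D} is mapped to itself, and the same computation preserves \eqref{adjoint_linearized_MASh5D}.

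The main difficulty is the completeness of the classification: the determining system for $\phi$ is large and its general order-$\le3$ solution carries the arbitrary function $A(t,z,s)$, so the integration must be arranged so as to guarantee that nothing beyond the listed generators appears --- in particular that $\psi_{13}$ is the only higher (third-order) symmetry and that no unexpected second-order symmetries are hidden. A secondary subtlety is the somewhat unusual composition law $\{\varphi_{i,m}(A),\varphi_{j,n}(B)\}=\varphi_{i+j,m+n}(AB_s-(1-3j)BA_s)$ for $i\le j$, which holds only under the cutoff $i+j\le1$, $m+n\le1$ and vanishes otherwise; confirming it, together with the cutoff, requires careful bookkeeping of the $s$-differentiations.
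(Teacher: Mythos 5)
Your proposal is correct and matches the paper's (implicit) approach: the paper states this proposition without a written proof, the result being obtained exactly as you describe — by solving the determining equation $\ell_{\tilde{\mathcal{E}}}(\phi,\chi)=0$ for generating sections of order $\le 3$ in internal coordinates via computer algebra (the Jets package), with the triangular splitting into the $u$-sector equation and the inhomogeneous adjoint equation for the $v$-component, and then evaluating the Jacobi brackets and the finite check of the $\mathfrak{gl}(3,\mathbb{R})\ltimes\mathbb{R}^3$ structure constants directly. Your remarks on completeness of the integration and on the cutoff in the $\varphi$-bracket are the right points of care, but they do not constitute a departure from the paper's route.
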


The Lie algebra $\mathfrak{s}_0$ of the contact symmetries of
Eq.~\eqref{MASh5D} is generated by the first components of the symmetries
$\psi_1,\dots,\psi_{12}$, $\varphi_{0,0}(A)$, and $\varphi_{0,1}(A)$.

Let
$\mathfrak{w}= \Der(C^\infty(\mathbb{R})) = \{f(s)\,\partial_s \,\,\vert\,\, f
\in C^\infty(\mathbb{R})\}$ be the Lie algebra of smooth vector fields on
$\mathbb{R}$ (or, equivalently, $\mathbb{R}$-linear derivations
of~$C^\infty(\mathbb{R})$). In other words, this Lie algebra is the vector
space $C^\infty(\mathbb{R})$ endowed with the bracket $[f,g]_0$ $=$
$f\,g_s-g\,f_s$.  For $n \in \mathbb{N}$, consider the commutative uni\-tal
algebra of the truncated polynomials in the (formal) variable $h$ of degree
$n$: \, $\mathbb{R}_n[h] = \mathbb{R}[h] / \langle h^{n+1} =0\rangle$.  Then
\begin{equation*}
  \mathfrak{s}_0 \cong 
  (\mathfrak{gl}(3,\mathbb{R}) \ltimes \mathbb{R}^3) 
  \ltimes (C^{\infty}(\mathbb{R}^2) \otimes \mathbb{R}_1[h]\otimes \mathfrak{w}).
\end{equation*}
We need the following construction to describe the structure of the Lie
algebra $\mathfrak{s}$. 

The Lie algebra $\mathfrak{q}_{n,0} = \mathbb{R}_n[\tau] \otimes \mathfrak{w}$
admits the de\-for\-ma\-ti\-on\footnote[1]{For a full description of the
  deformations of the subalgebra
  $\mathbb{R}_n[\tau] \otimes \Der(\mathbb{R}[s]) \subset \mathfrak{q}_{n,0}$
  see \cite{Zusmanovich2003}.}  generated by the cocycle
$\Psi \in H^2(\mathfrak{q}_{n,0}, \mathfrak{q}_{n,0})$,
\begin{equation*}
  \Psi (\tau^k \otimes f, \tau^m \otimes g) =
  \begin{cases}
    \displaystyle{\tau^{k+m} \otimes (k\,f\,g_s-m\,g\,f_s)}, &  k+m \le n,\\
    0, & \mathrm{otherwise}.
  \end{cases}
\end{equation*}
For each $\varepsilon \neq 0$, this cocycle defines a new bracket
$[\cdot, \cdot]_{\varepsilon} = [\cdot, \cdot] +\varepsilon\,\Psi(\cdot,
\cdot)$ on the linear space
$\langle \tau^m \otimes f \,\vert \, m \le n, f \in C^\infty(\mathbb{R})
\rangle$. We denote the resulting Lie algebra by
$\mathfrak{q}_{n,\varepsilon}$.  In other words,
$\mathfrak{q}_{n,\varepsilon}$ is isomorphic to the linear space of functions
$F(s,\tau) = f_0(s)+\tau\,f_1(s)+\dots+\tau^n\,f_n(s)$,
$f_k \in C^\infty(\mathbb{R})$, equipped with the bracket
\begin{equation*}
  [F,G]_\varepsilon =F\,G_s-G\,F_s+\varepsilon\,\tau\,(F_\tau\,G_s-G_\tau\,F_s)
\end{equation*}
such that for \, $k > n$ \,there holds\, $\tau^k =0$ . Arguments similar to
\cite{DavidKamranLeviWinternitz1985} show that the subalgebra of
$\mathfrak{q}_{n,\varepsilon}$ obtained by replacing
$\Der(C^\infty(\mathbb{R}))$ to $\Der(\mathbb{R}[s,s^{-1}])$ in the definition
of $\mathfrak{w}$ is a proper subalgebra of the affine Kac--Moody Lie algebra
$\mathfrak{g}(A_m^{(1)})$ with the generalized Cartan matrix $A_m^{(1)}$
\cite{Kac1990} for some $m \ge n$.  Therefore the algebras
$\mathfrak{q}_{n,\varepsilon}$ are referred to as the Lie algebras of
Kac--Moody type.

The map
$D_0=\tau\,\partial_\tau \colon \mathfrak{q}_{n,\varepsilon} \rightarrow
\mathfrak{q}_{n,\varepsilon}$,
$D_0 \colon \tau^k \otimes f \mapsto k \,\tau^k \otimes f$, is an outer
derivation of the Lie algebra $\mathfrak{q}_{n,\varepsilon}$ for all
$n \in \mathbb{N}$ and $\varepsilon \in \mathbb{R}$. For some special values
of the parameters $n$ and $\varepsilon$ this algebra has other outer
derivations.

\begin{proposition}
  \label{sec:symmetries-1}
  The Lie algebra $\mathfrak{q}_{1,-3}$ admits the outer derivation
  \begin{equation*}
    D_1(\tau^k\otimes f) =
    \begin{cases}
      \tau\otimes f_{sss}, &k = 0,\\
      0, & k=1.
    \end{cases}
  \end{equation*}
\end{proposition}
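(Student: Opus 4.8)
The plan is to verify directly that $D_1$ is a derivation of $\mathfrak{q}_{1,-3}$ and then to show it is not inner. For the first part, recall that $\mathfrak{q}_{1,-3}$ has underlying space $\{f_0(s)+\tau f_1(s)\}$ with bracket $[F,G]_{-3}=F\,G_s-G\,F_s-3\,\tau\,(F_\tau G_s-G_\tau F_s)$, so on homogeneous elements $F=\tau^k\otimes f$, $G=\tau^m\otimes g$ with $k,m\in\{0,1\}$ we have
\begin{equation*}
  [\tau^k\otimes f,\tau^m\otimes g]_{-3}
  =\tau^{k+m}\otimes\bigl((1-3k)\,f\,g_s-(1-3m)\,g\,f_s\bigr),
\end{equation*}
with the convention $\tau^2=0$. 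First I would check the Leibniz identity
$D_1[F,G]_{-3}=[D_1F,G]_{-3}+[F,D_1G]_{-3}$ on the three types of pairs of homogeneous generators. The cases in which at least one of $F,G$ has $\tau$-degree $1$ are immediate: $D_1$ kills degree-$1$ elements, and any bracket that could receive a nonzero contribution from $D_1$ must come from a degree-$0$ input, but $[D_1F,G]_{-3}$ then lands in degree $2=0$, so both sides are compared in low degree and one checks they agree (in fact both vanish when $F,G$ are both of degree $1$, since then $[F,G]_{-3}=0$ and $D_1F=D_1G=0$). The only substantive case is $F=f$, $G=g$ both of $\tau$-degree $0$: here $[F,G]_{-3}=f g_s-g f_s$ (a degree-$0$ element), so $D_1[F,G]_{-3}=\tau\otimes(f g_s-g f_s)_{sss}$, while $D_1F=\tau\otimes f_{sss}$, $D_1G=\tau\otimes g_{sss}$ are of degree $1$, and
\begin{equation*}
  [D_1F,G]_{-3}+[F,D_1G]_{-3}
  =\tau\otimes\bigl((1-3)\,f_{sss}\,g_s-g\,(f_{sss})_s\bigr)
  +\tau\otimes\bigl(f\,(g_{sss})_s-(1-3)\,g_{sss}\,f_s\bigr).
\end{equation*}
So the required identity reduces to the scalar ODE-flavoured identity
$(f g_s-g f_s)_{sss}=f\,g_{ssss}-g\,f_{ssss}-2\,(f_{sss}\,g_s-g_{sss}\,f_s)$, which I would confirm by expanding $(fg_s-gf_s)_{sss}$ with the Leibniz rule: the terms $f g_{ssss}$ and $-g f_{ssss}$ appear with coefficient $1$, the pairs $f_{sss}g_s$ and $-g_{sss}f_s$ appear from the expansion, and all mixed terms $f_{ss}g_{ss}$, $f_s g_{sss}$, etc. cancel in the antisymmetric combination. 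This is the key computation; it is precisely the reason the special value $\varepsilon=-3$ is forced, since the coefficient $1-3k$ at $k=1$ produces the needed $-2$ in front of the cross terms.

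Next I would prove that $D_1$ is outer, i.e.\ that there is no $X=x_0(s)+\tau\,x_1(s)\in\mathfrak{q}_{1,-3}$ with $D_1=\operatorname{ad}_X$. Suppose such an $X$ exists. Testing on a degree-$1$ element $G=\tau\otimes g$ we must have $0=D_1(\tau\otimes g)=[X,\tau\otimes g]_{-3}=[x_0,\tau g]_{-3}=\tau\otimes\bigl(x_0\,g_s+2\,g\,x_{0,s}\bigr)$ (using $1-3\cdot1=-2$ and that the $\tau x_1$ part of $X$ brackets to degree $2=0$). For this to vanish for all $g\in C^\infty(\mathbb R)$ one needs $x_0\,g_s+2\,g\,x_{0,s}=0$ identically in $g$, which forces $x_0=0$. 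But then $\operatorname{ad}_X(\tau^0\otimes f)=[\tau x_1,f]_{-3}=\tau\otimes\bigl(-2\,x_1\,f_s-f\,x_{1,s}\bigr)$, a first-order differential operator in $f$, whereas $D_1(\tau^0\otimes f)=\tau\otimes f_{sss}$ is genuinely third-order; these cannot coincide for all $f$. Hence no such $X$ exists and $D_1$ is outer.

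The main obstacle is the verification of the third-derivative Leibniz identity together with the bookkeeping of the truncation $\tau^2=0$: one must be careful that brackets of two degree-$1$ elements, or of a degree-$1$ with a degree-$0$ element whose image under $D_1$ would be degree $2$, are consistently set to zero, and that the identity $(fg_s-gf_s)_{sss}=fg_{ssss}-gf_{ssss}-2(f_{sss}g_s-g_{sss}f_s)$ is exactly the one that makes the $k=0$ versus $k=1$ coefficient mismatch cancel. Once that scalar identity is in hand, the derivation property is a short case check and the outerness argument above completes the proof.
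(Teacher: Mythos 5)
Your proposal is correct and follows essentially the same route as the paper: a direct case-by-case check of the Leibniz rule on homogeneous elements, where only the degree-zero pair is substantive and the third-derivative identity pins down $\varepsilon=-3$, followed by the observation that any inner derivation acts on degree-zero elements through a first-order operator in $f$ and so cannot produce $\tau\otimes f_{sss}$. The only cosmetic differences are that you dispose of the mixed and degree-one cases by truncation-degree bookkeeping (the paper writes out the vanishing $\tau^2$ and $\tau^3$ coefficients) and that you first force $x_0=0$ via degree-one elements, while the paper argues directly that the $\tau\,f_{sss}$ term cannot be cancelled.
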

\begin{proof}
  For $f, g \in C^\infty(\mathbb{R})$ and
  $D_1\colon \mathfrak{q}_{1,\varepsilon} \rightarrow
  \mathfrak{q}_{1,\varepsilon}$, there hold
  \begin{align*}
     &D_1([f,g]_{\varepsilon})-[D_1(f),g]_{\varepsilon}-[f,D_1(g)]_{\varepsilon}=
       (\varepsilon+3)\,\tau\,(f_{s}\,g_{sss}-g_{s}\,f_{sss}),\\
     &D_1([f,\tau\,g]_{\varepsilon})-[D_1(f),\tau\,g]_{\varepsilon}
       -[f,D_1(\tau\,g)]_{\varepsilon}
       \\
&=\tau^2\,((\varepsilon+3)\,f_s\,g_{sss}-(4\,\varepsilon+3)\,g_s\,f_{sss} -3\,\varepsilon\,f_{ss}\,g_{ss})=0, 
\\
    \intertext{and}
    &D_1([\tau\,f, \tau\,g]_{\varepsilon})-[D_1(\tau\,f),
    \tau\,g]_{\varepsilon}-[\tau\,f,
    D_1(\tau\,g)]_{\varepsilon}
\\
&=(4\,\varepsilon+3)\,\tau^3\,(f_{s}\,g_{sss}-g_{s}\,f_{sss})=0,
  \end{align*}
  therefore $D_1 \in \Der(\mathfrak{q}_{1,\varepsilon})$ if and only if
  $\varepsilon =-3$.  Furthermore,
  \begin{equation*}
    D_1(f) - [g_1+\tau\,g_2,f]_{\varepsilon} = f\,g_{1,s} -g_1\,f_{s}
    +\tau\,(f_{sss} - (1+\varepsilon)\,g_2\,f_{s} +f\,g_{2,s})
  \end{equation*}
  for all $f, g_1, g_2 \in C^\infty(\mathbb{R})$, and there is no choice of
  functions $g_1$, $g_2$ that would eliminate the term $\tau\,f_{sss}$ in the
  right-hand side of the last equation. Thus
  $D_1 \in \mathrm{Der}_{\mathrm{out}}(\mathfrak{q}_{1,-3})$.
\end{proof}

Denote by $\mathfrak{Q}$ the two-dimensional `right' extension, \cite[\S
1.4.4]{Fuks1984}, $\langle D_0, D_1\rangle \ltimes \mathfrak{q}_{1,-3}$ of the
Lie algebra $\mathfrak{q}_{1,-3}$ associated to the derivations $D_0$ and
$D_1$. As a vector space
$\mathfrak{Q} = \langle w_0, w_1 \rangle \oplus \mathfrak{q}_{1,-3}$, and the
bracket on $\mathfrak{q}_{1,-3}$ is extended to the new basis elements $w_0$
and $w_1$ by the formulas $[w_0,f+\tau\,g]_{-3}$ $=$ $D_0(f+\tau\,g)$ $=$
$\tau\,g$, $[w_1,f+\tau\,g]_{-3}$ $=$ $D_1(f+\tau\,g)$ $=$ $\tau\,f_{sss}$,
and $[w_0,w_1]_{-3}=w_1$. Then we have
\begin{equation*}
  \mathfrak{s} \cong 
  (\mathfrak{gl}(3,\mathbb{R}) \ltimes \mathbb{R}^3)  
  \ltimes 
  (C^{\infty}(\mathbb{R}^2) \otimes \mathbb{R}_1[h]\otimes \mathfrak{Q}),
\end{equation*}
where the derivatives $D_0$ and $D_1$ correspond to the symmetries $\psi_{14}$
and $\psi_{13}$.

\subsection{Lax representations}\label{sec:lax-representations-1}

Eq.~\eqref{MASh5D} admits the Lax representation
\begin{equation}\label{MASh5D_covering_Lambda}
  \begin{array}{lcl}
    w_t &=& \Lambda\,w_y-u_y\,w_s,
    \\
    w_z &=& \Lambda\,w_x-u_x\,w_s,
  \end{array}
\end{equation}
where
\begin{equation*}
  \Lambda= \frac{\lambda-\kappa\,y-\mu\,x }{1+\kappa\,t+\mu\,z}
\end{equation*}
with the parameters $\kappa$, $\mu$, $\lambda \in \mathbb{R}$
(see~\cite{BKMV2015}). When $\kappa=\mu=0$, this Lax representation coincides
with the Lax representation
\begin{equation}\label{MASh5D_main_covering}
  \begin{array}{lcl}
    w_t &=& \lambda\,w_y-u_y\,w_s,
    \\
    w_z &=& \lambda\,w_x-u_x\,w_s,
  \end{array}
\end{equation} 
which was found in \cite{ManakovSantini2006} and used intensively in
\cite{ManakovSantini2014}. The parameters $\kappa$, $\mu$, and $\lambda$ are
non-removable, that is, the differential coverings defined by system
\eqref{MASh5D_covering_Lambda} with the different constant values of $\kappa$,
$\mu$, and $\lambda$ are not equivalent. To ensure this, we note that the
first components of the symmetries $\psi_4$, $\psi_6$, and $\psi_{10}$ from
Proposition 1 do not admit lifts to symmetries of the system
\begin{equation}\label{MASh5D_special_covering}
  \begin{array}{lcl}
    w_t &=& -u_y\,w_s,
    \\
    w_z &=& -u_x\,w_s.
  \end{array}
\end{equation}
Then we take the vector field associated with the first component of the
linear combination $-\mu\,\psi_4+\kappa\,\psi_6+\lambda\,\psi_{10}$. The flow
of its first prolongation maps system \eqref{MASh5D_special_covering} to
system \eqref{MASh5D_covering_Lambda}.  In accordance with \cite[\S\S~3.2,
3.6]{KrasilshchikVinogradov1989},
\cite{Krasilshchik2000,IgoninKrasilshchik2000,Marvan2002,IgoninKerstenKrasilshchik2002}
this proves the claim.

In what follows, we will use another Lax representation for
Eq. \eqref{MASh5D}.  We observe that the function
\begin{equation}
  \theta=\frac{1}{(\lambda-\kappa\,y-\mu\,x)\,w_s}
\label{theta_shadow}
\end{equation}
is a shadow of a nonlocal symmetry for Eq. \eqref{MASh5D} in the covering
\eqref{MASh5D_covering_Lambda}.  We express $w_s$ from~\eqref{theta_shadow},
differentiate the result with respect to $t$ and $z$, and substitute
to~Eqs.~\eqref{MASh5D_covering_Lambda}.  This yields a new covering
\begin{equation}\label{MASh5D_shadow_generated_covering}
  \begin{array}{lcl}
    \theta_t &=& \displaystyle{\Lambda\,\theta_y-u_y\,\theta_s
                 + \left(u_{ys}-\frac{\kappa}{\kappa\,t+\mu\,z+1}\right)
                 \,\theta},  
    \\[14pt]
    \theta_z &=& \displaystyle{\Lambda\,\theta_x-u_x\,\theta_s
                 +\left(u_{xs}-\frac{\mu}{\kappa\,t+\mu\,z+1}\right)\,\theta}
  \end{array}
\end{equation}
over Eq.~\eqref{MASh5D}. This covering admits an extension to a covering over 
$\EuScript{T}^{\,}\EuScript{E}$:

\begin{proposition} Systems \eqref{MASh5D_shadow_generated_covering} and
  \begin{equation}\label{extended_MASh5D_shadow_generated_covering}
    \begin{array}{lcl}
      \omega_t
      &=& \displaystyle{\Lambda\,\omega_y-u_y\,\omega_s
          -\left(2\,u_{ys}+\frac{\kappa}{\kappa\,t+\mu\,z+1}\right)\,\omega}
          +2\,v_y\,\theta_s+v_{ys}\,\theta,
      \\[14pt]
      \omega_z
      &=& \displaystyle{\Lambda\,\omega_x-u_x\,\omega_s
          -\left(2\,u_{xs}+\frac{\mu}{\kappa\,t+\mu\,z+1}\right)\,\omega}
          +2\,v_x\,\theta_s+v_{xs}\,\theta
    \end{array}
  \end{equation}
  define a Lax representation for System~\eqref{MASh5D}\textup{,}
  \eqref{adjoint_linearized_MASh5D}.  \hfill $\Box$
\end{proposition}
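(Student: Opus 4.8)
The plan is a straightforward compatibility verification. Since the defining equations \eqref{MASh5D_shadow_generated_covering}, \eqref{extended_MASh5D_shadow_generated_covering} are manifestly linear in the nonlocal pair $(\theta,\omega)$, it is enough to show that, viewed as equations for $(\theta,\omega)$ over System~\eqref{MASh5D},~\eqref{adjoint_linearized_MASh5D}, they are compatible, i.e. that $D_z(\theta_t)=D_t(\theta_z)$ and $D_z(\omega_t)=D_t(\omega_z)$ hold identically on the infinite prolongation. The first of these identities is exactly the already recorded fact that \eqref{MASh5D_shadow_generated_covering} is a covering over $\mathcal{E}$ (it descends from the Lax pair \eqref{MASh5D_covering_Lambda} via the substitution \eqref{theta_shadow}), so only the second one carries content. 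In the computation I would use that $D_x,D_y,D_s$ produce from $\theta,\omega$ the unconstrained nonlocal variables $\theta_x,\theta_y,\theta_s,\omega_x,\omega_y,\omega_s$ and their further $D_x,D_y,D_s$-shifts; that $D_t,D_z$ act on $\theta,\omega$ through \eqref{MASh5D_shadow_generated_covering}, \eqref{extended_MASh5D_shadow_generated_covering}, on the jets of $u$ through \eqref{MASh5D} and its total derivatives, and on the jets of $v$ through \eqref{adjoint_linearized_MASh5D} and its total derivatives; and that, with $P=1+\kappa\,t+\mu\,z$, one has $D_t(1/P)=-\kappa/P^{2}$, $D_z(1/P)=-\mu/P^{2}$, $D_t\Lambda=-(\kappa/P)\,\Lambda$, $D_z\Lambda=-(\mu/P)\,\Lambda$, $D_y\Lambda=-\kappa/P$, $D_x\Lambda=-\mu/P$.

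Concretely, I would form $E:=D_z(\omega_t)-D_t(\omega_z)$ and split it, according to its dependence on $v$, as $E=E^{(0)}+E^{(1)}$, where $E^{(0)}$ is free of $v$ and linear in $\omega,\omega_x,\omega_y,\omega_s,\dots$, while $E^{(1)}$ is linear in the jets of $v$ and in $\theta,\theta_x,\theta_y,\theta_s,\theta_{xs},\theta_{ys},\dots$. The $\omega$-equations of \eqref{extended_MASh5D_shadow_generated_covering} differ from the $\theta$-equations of \eqref{MASh5D_shadow_generated_covering} only in the coefficient of the undifferentiated nonlocal variable (and in the added $v$-source terms), so $E^{(0)}$ reduces, after the obvious relabelling, to the combination that certifies \eqref{MASh5D_shadow_generated_covering} plus terms produced solely by that discrepancy; using \eqref{MASh5D} and its $D_s,D_x,D_y$-derivatives one gets $E^{(0)}\equiv0$ by the same mechanism. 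For $E^{(1)}$ I would substitute $D_z(v_y)=v_{yz}$ and its $D_s$-, $D_x$-, $D_y$-shifts from \eqref{adjoint_linearized_MASh5D}, keep $v_{tx},v_{xz},v_{ts},\dots$ as free jet coordinates, and verify that the vanishing of the coefficient of every independent monomial in the $v$-jets and in $\theta$ and its derivatives is precisely \eqref{adjoint_linearized_MASh5D} or one of its $s$-, $x$-, $y$-derivatives; hence $E^{(1)}\equiv0$ on System~\eqref{MASh5D},~\eqref{adjoint_linearized_MASh5D}. Consequently $E\equiv0$, no additional constraint on $(u,v)$ is forced, and \eqref{MASh5D_shadow_generated_covering}--\eqref{extended_MASh5D_shadow_generated_covering} define a covering over this system --- a Lax representation, by the linearity in $(\theta,\omega)$ already noted.

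A useful a priori check on the computation is that the $\omega$-block of \eqref{extended_MASh5D_shadow_generated_covering} can be seen as the adjoint-linearization lift of the $\theta$-covering \eqref{MASh5D_shadow_generated_covering} along \eqref{MASh5D} (the weight carried by $\theta$ in \eqref{theta_shadow} is responsible for the coefficients $-(2u_{ys}+\kappa/P)$ and $2\,v_y\,\theta_s$ rather than those of a naive linearization), which explains why the constraint $\ell_{\mathcal{E}}^{*}(v)=0$, that is, \eqref{adjoint_linearized_MASh5D}, is exactly what closes the system and nothing beyond it is needed. The main obstacle is purely the bookkeeping: keeping track of the rational $P^{-1}$, $P^{-2}$ contributions and of the derivatives of $\Lambda$, and checking that in $E^{(1)}$ every coefficient collapses, with no remainder, onto a derivative of \eqref{adjoint_linearized_MASh5D}. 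This is a finite but lengthy verification, most safely carried out with a computer algebra system.
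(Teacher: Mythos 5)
Your proposal is correct and coincides with what the paper (implicitly) does: the proposition is stated with the proof omitted as a direct verification, i.e.\ precisely the compatibility check $D_z(\theta_t)=D_t(\theta_z)$, $D_z(\omega_t)=D_t(\omega_z)$ modulo \eqref{MASh5D}, \eqref{adjoint_linearized_MASh5D} and their differential consequences, carried out by straightforward (computer-assisted) computation. Your decomposition into the $v$-free part (handled as for \eqref{MASh5D_shadow_generated_covering}, since the homogeneous $\omega$-block differs only in the zeroth-order coefficient) and the $v$-dependent part (collapsing onto \eqref{adjoint_linearized_MASh5D} and its $s$-, $x$-, $y$-derivatives) is a sound organization of exactly that check.
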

Moreover,  we have
\begin{proposition}\label{sec:lax-representations}
  A solution $(\theta, \omega)$ of System
  \eqref{MASh5D_shadow_generated_covering}\textup{,}
  \eqref{extended_MASh5D_shadow_generated_covering} is a shadow of a
  non\-lo\-cal symmetry for System~\eqref{MASh5D}\textup{,}
  \eqref{adjoint_linearized_MASh5D}.  \hfill $\Box$
\end{proposition}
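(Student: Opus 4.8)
The plan is to write down the generating section of the shadow explicitly and then verify the defining equation of a shadow by a direct computation on the total space of the covering. Recall that a shadow of a nonlocal symmetry for System \eqref{MASh5D}, \eqref{adjoint_linearized_MASh5D} in the covering $\tau$ given by \eqref{MASh5D_shadow_generated_covering}, \eqref{extended_MASh5D_shadow_generated_covering} is a two-component section $\phi=(\phi^u,\phi^v)$ on that total space satisfying $\tilde{\ell}(\phi)=0$, where $\tilde{\ell}$ denotes the lift to the covering of the linearization operator of System \eqref{MASh5D}, \eqref{adjoint_linearized_MASh5D}. Since \eqref{MASh5D} contains no~$v$, this operator is lower triangular,
\[
  \tilde{\ell}=
  \begin{pmatrix} \ell_{\mathcal{E}} & 0 \\[2pt] \ell_{21} & \ell_{22} \end{pmatrix},
\]
where $\ell_{\mathcal{E}}$ acts by \eqref{linearized_MASh5D}, $\ell_{22}(q)=q_{yz}-q_{tx}-u_y\,q_{xs}+u_x\,q_{ys}-2\,u_{ys}\,q_x+2\,u_{xs}\,q_y$ is \eqref{adjoint_linearized_MASh5D} read off in its own unknown, and $\ell_{21}(q)=-q_y\,v_{xs}+q_x\,v_{ys}-2\,q_{ys}\,v_x+2\,q_{xs}\,v_y$ is the coupling block. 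The claim to be proved is that the pair of nonlocal variables themselves, $\phi=(\theta,\omega)$, is such a section; this is a natural guess, since $\theta$ was obtained from a shadow of $\mathcal{E}$ and \eqref{extended_MASh5D_shadow_generated_covering} is its companion relation in the $v$-direction of the cotangent extension. The proof then splits into checking $\ell_{\mathcal{E}}(\theta)=0$ and $\ell_{21}(\theta)+\ell_{22}(\omega)=0$.

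For the first identity I would compute $D_yD_z\theta-D_xD_t\theta$, using \eqref{MASh5D_shadow_generated_covering} to evaluate $D_t\theta$ and $D_z\theta$ while treating $\theta_x,\theta_y,\theta_s$ and their $x,y,s$-prolongations as independent fibre coordinates. The second-order terms in $\Lambda$ cancel because $\Lambda$ does not depend on~$s$; the terms with $\kappa/(\kappa t+\mu z+1)$ and $\mu/(\kappa t+\mu z+1)$ cancel against the $s$-free pieces produced by $\Lambda_y=-\kappa/(\kappa t+\mu z+1)$ and $\Lambda_x=-\mu/(\kappa t+\mu z+1)$; and the remainder is precisely $u_y\,\theta_{xs}+u_{xs}\,\theta_y-u_x\,\theta_{ys}-u_{ys}\,\theta_x$, that is, $\ell_{\mathcal{E}}(\theta)=0$. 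For the second identity I would likewise compute $D_yD_z\omega-D_xD_t\omega$ from \eqref{extended_MASh5D_shadow_generated_covering}: the parameter-dependent terms again drop out, the purely $\omega$-dependent part of $D_yD_z\omega-D_xD_t\omega$ cancels against the lower-order terms $-u_y\,\omega_{xs}+u_x\,\omega_{ys}-2\,u_{ys}\,\omega_x+2\,u_{xs}\,\omega_y$ inside $\ell_{22}(\omega)$, and the cross-terms in $v$ and $\theta$ assemble into $2\,v_x\,\theta_{ys}-2\,v_y\,\theta_{xs}+v_{xs}\,\theta_y-v_{ys}\,\theta_x$. One line of rearrangement identifies this with $-\ell_{21}(\theta)$, so $\ell_{21}(\theta)+\ell_{22}(\omega)=0$; together with the first identity this gives $\tilde{\ell}(\theta,\omega)=0$. (Equations \eqref{MASh5D}, \eqref{adjoint_linearized_MASh5D} may be invoked to tidy intermediate expressions, although in the route just described they are not actually needed.)

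The only genuine difficulty is organizational: one must keep track of the numerous mixed partial derivatives and of the $\kappa,\mu,\Lambda$ contributions so that all the cancellations become manifest. There is no conceptual obstacle once the candidate section $(\theta,\omega)$ has been identified — the statement is essentially a reading-off of the shadow carried by the covering constructed in the preceding proposition.
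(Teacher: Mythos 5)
Your proposal is correct and coincides with what the paper leaves implicit: the proposition is stated with the proof omitted as a straightforward direct computation, and your verification that $\tilde{\ell}_{\mathcal{E}}(\theta)=0$ and $\ell_{21}(\theta)+\ell_{22}(\omega)=0$ on the covering (with $D_t,D_z$ evaluated via \eqref{MASh5D_shadow_generated_covering}, \eqref{extended_MASh5D_shadow_generated_covering} and the $x,y,s$-derivatives treated as fibre coordinates) is exactly that computation, with the cancellations occurring as you describe. Your observation that Eqs.~\eqref{MASh5D}, \eqref{adjoint_linearized_MASh5D} are not even needed in this particular cross-differentiation is also accurate, since only the $D_y$--$D_z$ and $D_x$--$D_t$ combinations are involved.
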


Another  covering  over 
$\EuScript{T}^{*}\EuScript{E}$ is defined by a lift of System  \eqref{MASh5D_main_covering}:

\begin{proposition}\label{sec:lax-representations}
  Systems \eqref{MASh5D_main_covering} and 
\begin{equation}\label{MASh5D_lift_of_main_covering}
  \begin{array}{lcl}
    W_t &=& \lambda\,W_y-u_y\,W_s -3 \,u_{ys} \,W+v_y\,w_s,
    \\
    W_z &=& \lambda\,W_x-u_x\,W_s-3\,u_{xs}\,W+v_x\,w_s.
  \end{array}
\end{equation}
provide a Lax representation for System \eqref{MASh5D}\textup{,}
\eqref{adjoint_linearized_MASh5D}.  \hfill $\Box$
\end{proposition}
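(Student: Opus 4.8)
The plan is to verify directly that the pair \eqref{MASh5D_main_covering}, \eqref{MASh5D_lift_of_main_covering} is a covering over $\mathcal{T}^*\mathcal{E}$ by checking the compatibility conditions, and then invoke the fact that the defining equations are linear in the nonlocal variables $w$ and $W$ to conclude that this covering is a Lax representation. First I would recall that \eqref{MASh5D_main_covering} is already known to be a covering over $\mathcal{E}$ \eqref{MASh5D}: the compatibility condition $D_z(w_t) = D_t(w_z)$ holds modulo \eqref{MASh5D}. So the only new content is the second pair of equations for $W$, and the interaction between the two.

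The key step is the compatibility computation $D_z(W_t) = D_t(W_z)$ modulo the full system \eqref{MASh5D}, \eqref{adjoint_linearized_MASh5D}, \eqref{MASh5D_main_covering}. Concretely, I would apply $D_z$ to the right-hand side of the $W_t$-equation and $D_t$ to the right-hand side of the $W_z$-equation, using the prolongation of the total derivatives along \eqref{MASh5D_main_covering} (so that $D_t$ and $D_z$ act on $w$ and its $s$-derivatives via $w_t = \lambda w_y - u_y w_s$, $w_z = \lambda w_x - u_x w_s$, and likewise for $W$ via \eqref{MASh5D_lift_of_main_covering}). Subtracting, the terms linear in $W$ and its derivatives should cancel because they reproduce the linearization operator \eqref{linearized_MASh5D} applied to the combination appearing in front of $W$ — here the factor $-3u_{ys}$, $-3u_{xs}$ is exactly what makes this work, and the coefficient $3$ is presumably forced by this cancellation (it matches $\psi_{13} = (0, -u_{sss})$ and the deformation parameter $\varepsilon = -3$ from Proposition~\ref{sec:symmetries-1}, which is a good internal consistency check). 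The remaining terms, linear in $v$ and its derivatives with coefficients built from $w_s$, should vanish precisely modulo the adjoint linearization \eqref{adjoint_linearized_MASh5D}: collecting the coefficient of $w_s$ one should recover $v_{yz} - v_{tx} - u_y v_{xs} + u_x v_{ys} - 2u_{ys}v_x + 2u_{xs}v_y$, and the coefficient of $w_{ss}$ and lower-order $w$-terms should cancel identically.

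Having checked compatibility, I would note that System \eqref{MASh5D_main_covering}, \eqref{MASh5D_lift_of_main_covering} is overdetermined with compatibility conditions reducing to \eqref{MASh5D}, \eqref{adjoint_linearized_MASh5D}, hence $\tau\colon \mathcal{W} \to \mathcal{T}^*\mathcal{E}$ is a covering in the sense recalled in Section~\ref{sec:prel-notat}; since the defining relations are $\mathbb{R}$-linear in $(w, W)$, this covering is by definition a Lax pair. Alternatively — and this is the cleaner route — one can observe that \eqref{MASh5D_lift_of_main_covering} is nothing but the lift of \eqref{MASh5D_main_covering} to $\mathcal{T}^*\mathcal{E}$ in the sense of the construction with $\tilde{\ell}_{\mathcal{E}}$, so its consistency is automatic once one knows \eqref{MASh5D_main_covering} covers $\mathcal{E}$ and that the extra $W$-block solves $\tilde{\ell}_{\mathcal{E}}^{*}$-type constraints; I would present the statement as the outcome of that lifting procedure and relegate the explicit verification to a remark.

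The main obstacle is purely computational bookkeeping: tracking which total derivatives get replaced using which covering equation, and in particular remembering that $D_t$, $D_z$ must be prolonged to act on the nonlocal variables $w, w_s, w_{ss}, \dots$ and $W, W_s, \dots$ before any substitution. A subtler point is that the cancellation of the $v$-terms is not term-by-term: one has to use \eqref{adjoint_linearized_MASh5D} together with its $s$-differentiated consequence (since $v_{xs}$, $v_{ys}$ appear), so one should be prepared to use the prolongation of the adjoint linearization, not just the bare equation. I do not expect any genuine difficulty beyond organizing this carefully; the coefficient $3$ is the only "magic number," and its correctness is corroborated by the symmetry-algebra structure already established above.
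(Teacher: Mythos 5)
Your direct compatibility check $D_z(W_t)=D_t(W_z)$ modulo \eqref{MASh5D}, \eqref{adjoint_linearized_MASh5D}, \eqref{MASh5D_main_covering} is precisely the (omitted, purely computational) verification behind the paper's statement, and the identities you predict — the $w_{ss}$-terms cancelling identically and the $w_s$-coefficient reproducing $v_{yz}-v_{tx}-u_yv_{xs}+u_xv_{ys}-2u_{ys}v_x+2u_{xs}v_y$ — are the right ones, so the proposal is correct and takes essentially the same route. One small correction of detail: the terms linear in $W$ cancel for \emph{any} coefficient $c$ in place of $3$ (they assemble into $-c\,D_s$ applied to \eqref{MASh5D}), so the value $3$ is forced not there but by the $w_s$-coefficient, which must match the factors $2$ in the adjoint linearization \eqref{adjoint_linearized_MASh5D}.
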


In Section \ref{sec:actions} below we will use nonlocal variables of the
so-called negative and positive coverings, see~\cite{BKMV2015}, associated to
covering \eqref{MASh5D_main_covering},
\eqref{MASh5D_lift_of_main_covering}. To construct the negative covering, we
substitute the formal expansions $w=\sum_{n \ge 0} \lambda^{-n}\,w_{-n}$,
$W=\sum_{n \ge 0} \lambda^{-n}\,W_{-n}$ into \eqref{MASh5D_main_covering},
\eqref{MASh5D_lift_of_main_covering} and obtain the infinite tower of Abelian
two-component coverings (nonlocal conservation laws) given by equations
\begin{equation*}
  \begin{array}{lcl}
    w_{0,x}&=& 0, 
    \\
    w_{0,y} &=&0,
    \\
    w_{-n-1,x} &=& w_{-n,z}+u_x\,w_{-n,s},
    \\
    w_{-n-1,y} &=& w_{-n,t}+u_y\,w_{-n,s},
  \end{array}
\end{equation*}
and 
\begin{equation*}
  \begin{array}{lcl}
    W_{0,x}&=& 0, 
    \\
    W_{0,y} &=&0,
    \\
    W_{-n-1,x} &=& W_{-n,z}+u_x\,W_{-n,s}+3\,u_{xs}\,W_{-n}-v_x\,w_{-n,s},
    \\
    W_{-n-1,y} &=& W_{-n,t}+u_y\,W_{-n,s}+3\,u_{ys}\,W_{-n}-v_y\,w_{-n,s}.
  \end{array}
\end{equation*}
In particular, if we put $w_0=s$, $w_{-1}=u$, $W_0=0$,
$W_{-1}=-v$ and denote $p=w_{-2}$,
$r=-W_{-2}+3\,u_s\,v$, we obtain the local conservation law
\begin{equation*}
  \begin{array}{lcl}
    p_{x} &=& u_{z}+u_x\,u_{s},
    \\
    p_{y} &=& u_{t}+u_y\,u_{s},
    \\
    r_{x} &=& v_{z}+u_x\,v_{s}-2\,u_s\,v_x,
    \\
    r_{y} &=& v_{t}+u_y\,v_{s}-2\,u_s\,v_y
  \end{array}
\end{equation*}
for $\EuScript{T}^{*}\EuScript{E}$.

Likewise, the positive covering is generated by the expansions
$w=\sum_{n \ge 0} \lambda^{n}\,w_{n}$, $W=\sum_{n \ge 0}
\lambda^{n}\,W_{n}$ that produce systems
\begin{equation*}
  \begin{array}{lcl}
    w_{0,t}&=& -u_y\,w_{0,s}, 
    \\
    w_{0,z} &=&-u_x\,w_{0,s},
    \\
    w_{n+1,t} &=& -u_y\,w_{n+1,s}+w_{n,y},
    \\
    w_{n+1,z} &=& -u_x\,w_{n+1,s}+w_{n,x}
  \end{array}
\end{equation*}
and 
\begin{equation*}
  \begin{array}{lcl}
    W_{0,t}&=& -u_y\,W_{0,s}+v_y\,w_{0,s}-3\,u_{ys}\,W_{0}, 
    \\
    W_{0,z}&=& -u_x\,W_{0,s}+v_x\,w_{0,s}-3\,u_{xs}\,W_{0},
    \\
    W_{n+1,t}&=& -u_y\,W_{n+1,s}+v_y\,w_{n+1,s}-3\,u_{ys}\,W_{n+1}+W_{n,y}, 
    \\
    W_{n+1,z}&=& -u_x\,W_{n+1,s}+v_x\,w_{n+1,s}-3\,u_{xs}\,W_{n+1}+W_{n,x}.
  \end{array}
\end{equation*}

\subsection{Recursion operators}\label{sec:recursion-operators-2}

In view of Proposition~\ref{sec:lax-representations}, we can use
Systems~\eqref{MASh5D_shadow_generated_covering},
\eqref{extended_MASh5D_shadow_generated_covering} for deriving recursion
operators for symmetries of Eqs. \eqref{MASh5D},
\eqref{adjoint_linearized_MASh5D} by the method of \cite{Sergyeyev2017}, see
also
\cite{KrasilshchikKersten1994,KrasilshchikKersten1995,MalykhNutkuSheftel2004,MarvanSergyeyev2012,MorozovSergyeyev2014,
  KruglikovMorozov2015}.  Since these equations are independent on the
parameters $\kappa$, $\mu$, $\lambda$, we expand $\theta$, $\omega$ in the
formal Laurent series with respect to one of the parameters, substitute this
series into systems \eqref{MASh5D_shadow_generated_covering},
\eqref{extended_MASh5D_shadow_generated_covering}, and collect terms at the
same powers of the parameter.  So, if we choose the parameter $\lambda$, we
substitute $\theta =\sum_{n \in \mathbb{Z}} \theta_n\,\lambda^n$,
$\omega =\sum_{n \in \mathbb{Z}}\omega_n \lambda^n$, then collect the terms at
$\lambda^n$ for a fixed arbitrary $n \in \mathbb{Z}$, and finally rename
$\theta_{n-1} \mapsto \tilde{\theta}$, $\theta_{n} \mapsto \theta$,
$\omega_{n-1} \mapsto \tilde{\omega}$, $\omega_{n} \mapsto \omega$. This gives
two systems
\begin{equation}\label{lambda_recursion_operator_for_MASh5D}
  \begin{array}{lcl}
    \tilde{\theta}_x
    &=& 
        -(\kappa\,t+\mu\,z+1)\,(\theta_z+u_x\,\theta_s-u_{xs}\,\theta)
        -(\kappa\,y+\mu\,x) \, \theta_x  -\mu\,\theta,
    \\
    \tilde{\theta}_y
    &=& -(\kappa\,t+\mu\,z+1)\,(\theta_t+u_y\,\theta_s-u_{ys}\,\theta)
        -(\kappa\,y+\mu\,x) \, \theta_y  -\kappa\,\theta
  \end{array}
\end{equation}  
and
\begin{equation}\label{lambda_recursion_operator_for_adjoint_linearized_MASh5D}
  \begin{array}{lcl}
    \tilde{\omega}_x
    &=&  
        -(\kappa\,t+\mu\,z+1)\,(\omega_z+u_x\,\omega_s+2\,u_{xs}\,
        \omega-2\,v_x\,\theta_s-v_{xs}\,\theta)   
    \\
    &&
       -(\kappa\,y+\mu\,x) \, \omega_x
       -\mu\,\omega,
    \\
    \tilde{\omega}_y
    &=&  
        -(\kappa\,t+\mu\,z+1)\,(\omega_t+u_y\,\omega_s+2\,u_{ys}\,
        \omega-2\,v_y\,\theta_s-v_{ys}\,\theta)  
    \\
    &&
       -(\kappa\,y+\mu\,x) \, \omega_y
       -\kappa\,\omega.
  \end{array}
\end{equation}

Likewise, expanding $\theta$ and $\omega$ into the Laurent series with respect
to $\mu$ gives the systems
\begin{equation}\label{beta_recursion_operator_for_MASh5D}
  \begin{array}{lcl}
    \tilde{\theta}_x
    &=&-\displaystyle{\frac{1}{x}}\,
        \left(z\,(\tilde{\theta}_z+u_x\,\tilde{\theta}_s-u_{xs}\,
        \tilde{\theta})+\tilde{\theta} 
        +(\kappa\,y-\lambda)\,\theta_x\right.
    \\
    &&
       \qquad\left.\phantom{\tilde{\theta}}
       +(\kappa\,t+1)\,(\theta_z+u_x\,\theta_s-u_{xs}\,\theta)\right),
    \\
    \tilde{\theta}_y
    &=&-\displaystyle{\frac{1}{x}}\,
        \left(z\,(\tilde{\theta}_t+u_y\,\tilde{\theta}_s-u_{ys}\,
        \tilde{\theta})+\kappa\,\theta 
        +(\kappa\,y-\lambda)\,\theta_y\right.
    \\
    &&
       \qquad\left.\phantom{\tilde{\theta}}
       +(\kappa\,t+1)\,(\theta_t+u_y\,\theta_s-u_{ys}\,\theta)\right),
  \end{array}
\end{equation}  
and
\begin{equation}\label{beta_recursion_operator_for_adjoint_linearized_MASh5D}
  \begin{array}{lcl}
    \tilde{\omega}_x
    &=&-\displaystyle{\frac{1}{x}}\,
        \left(z\,(\tilde{\omega}_z+u_x\,\tilde{\omega}_s
        +2\,u_{xs}\,\tilde{\omega}-2\,v_x\,\tilde{\theta}_s-v_{xs}\,
        \tilde{\theta})\right.
    \\
    &&
       +(\kappa\,t+1)\,(\omega_z+u_x\,\omega_s+2\,u_{xs}\,\omega-2\,v_x\,
       \theta_s-v_{xs}\,\theta) 
    \\
    &&
       \qquad\left.
       +\tilde{\theta}+(\kappa\,y-\lambda)\,\theta_x
       \right),
    \\ 
    \tilde{\omega}_y
    &=&-\displaystyle{\frac{1}{x}}\,
        \left(z\,(\tilde{\omega}_t+u_y\,\tilde{\omega}_s+2\,u_{ys}\,
        \tilde{\omega}-2\,v_y\,\tilde{\theta}_s-v_{ys}\,\tilde{\theta}) 
        \right.
    \\
    &&
       +(\kappa\,t+1)\,(\omega_t+u_y\,\omega_s+2\,u_{ys}\,\omega-2\,v_y\,
       \theta_s-v_{ys}\,\theta) 
    \\
    &&
       \qquad\left.
       +\kappa\,\tilde{\theta}+(\kappa\,y-\lambda)\,\theta_y
       \right).
  \end{array}
\end{equation}

The correspondence $\theta \mapsto \tilde{\theta}$ defined by
System~\eqref{lambda_recursion_operator_for_MASh5D} will be denoted by
$\tilde{\theta} = \mathcal{R}_{\kappa,\mu}(\theta)$, and its lift
$(\theta, \omega) \mapsto (\tilde{\theta}, \tilde{\omega})$ defined by
Systems~\eqref{lambda_recursion_operator_for_MASh5D},
\eqref{lambda_recursion_operator_for_adjoint_linearized_MASh5D} will be
denoted by
$(\tilde{\theta},\tilde{\omega}) =
\hat{\mathcal{R}}_{\kappa,\mu}(\theta,\omega)$.  Likewise, the correspondences
defines by System~\eqref{beta_recursion_operator_for_MASh5D} and by
Systems~\eqref{beta_recursion_operator_for_MASh5D},
\eqref{beta_recursion_operator_for_adjoint_linearized_MASh5D} will be denoted
by $\tilde{\theta}= \mathcal{S}_{\kappa,\lambda}(\theta)$ and
$(\tilde{\theta},\tilde{\omega}) =
\hat{\mathcal{S}}_{\kappa,\lambda}(\theta,\omega)$, respectively.  Since each
component of the above Lau\-rent series is a shadow of a symmetry, we obtain
the following assertion.

\begin{proposition}\label{sec:recursion-operators}
  Systems \eqref{lambda_recursion_operator_for_MASh5D} and
  \eqref{beta_recursion_operator_for_MASh5D} define two-parametric families of
  recursion operators $\EuScript{R}_{\kappa,\mu}$ and
  $\EuScript{S}_{\kappa,\lambda}$ for symmetries of
  Eq.~\eqref{MASh5D}. Likewise,
  Systems~\eqref{lambda_recursion_operator_for_MASh5D},
  \eqref{lambda_recursion_operator_for_adjoint_linearized_MASh5D} and
  \eqref{beta_recursion_operator_for_MASh5D},
  \eqref{beta_recursion_operator_for_adjoint_linearized_MASh5D} define
  two-parametric families of recursion operators
  $\hat{\EuScript{R}}_{\kappa,\mu}$ and $\hat{\EuScript{S}}_{\kappa,\lambda}$
  for symmetries of the cotangent extension of Eq.~\eqref{MASh5D}.  \hfill
  $\Box$
\end{proposition}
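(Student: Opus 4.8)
The plan is to exploit that all four systems arise from a single three-parameter covering by a Laurent expansion in one of the parameters; it therefore suffices to treat System~\eqref{lambda_recursion_operator_for_MASh5D} (and its companion~\eqref{lambda_recursion_operator_for_adjoint_linearized_MASh5D}) in detail and to indicate the evident changes for the $\mu$-expansion. First I would record the input already available: by the construction of~\eqref{MASh5D_shadow_generated_covering} from the shadow~\eqref{theta_shadow} of the Lax pair~\eqref{MASh5D_covering_Lambda}, every solution $\theta$ of~\eqref{MASh5D_shadow_generated_covering} is a shadow of a nonlocal symmetry of Eq.~\eqref{MASh5D}, and every solution $(\theta,\omega)$ of Systems~\eqref{MASh5D_shadow_generated_covering},~\eqref{extended_MASh5D_shadow_generated_covering} is a shadow of a nonlocal symmetry of the cotangent extension~\eqref{MASh5D},~\eqref{adjoint_linearized_MASh5D}; moreover these are compatible coverings over~$\mathcal{E}$, resp. over~\eqref{MASh5D},~\eqref{adjoint_linearized_MASh5D}.

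The core step is the Laurent expansion (the method of~\cite{Sergyeyev2017}). Since Eqs.~\eqref{MASh5D},~\eqref{adjoint_linearized_MASh5D} and their linearizations do not involve $\lambda$, whereas~\eqref{MASh5D_shadow_generated_covering},~\eqref{extended_MASh5D_shadow_generated_covering} do through $\Lambda$, I would substitute $\theta=\sum_{n\in\mathbb{Z}}\theta_n\lambda^n$ and $\omega=\sum_{n\in\mathbb{Z}}\omega_n\lambda^n$ into these systems, clear the denominator $1+\kappa t+\mu z$, and collect the coefficient of $\lambda^n$. This produces a covering $\Sigma_{\kappa,\mu}$ over~$\mathcal{E}$ (resp. over the cotangent extension) with nonlocal variables $\{\theta_n\}$ (resp. $\{\theta_n,\omega_n\}$) and $\lambda$-free defining equations depending on $\kappa$ and $\mu$ only; the relations between the consecutive variables $\theta_{n-1}$, $\theta_n$ and $\omega_{n-1}$, $\omega_n$, after the renaming $\theta_{n-1}\mapsto\tilde\theta$, $\theta_n\mapsto\theta$, $\omega_{n-1}\mapsto\tilde\omega$, $\omega_n\mapsto\omega$, are exactly Systems~\eqref{lambda_recursion_operator_for_MASh5D},~\eqref{lambda_recursion_operator_for_adjoint_linearized_MASh5D}. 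I would carry out this substitution for arbitrary $n$ and check that the degree count in $\lambda$ indeed leaves the two parameters $\kappa$, $\mu$ free.

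It then remains to recognize Systems~\eqref{lambda_recursion_operator_for_MASh5D},~\eqref{lambda_recursion_operator_for_adjoint_linearized_MASh5D} as the datum of a recursion operator in the sense of diagram~\eqref{eq:3}. For this I would let $\mathcal{W}$ be the covering of~$\mathcal{E}$ obtained by adjoining the nonlocal variables $\theta$, $\tilde\theta$ (and, for the cotangent extension, $\omega$, $\tilde\omega$) subject to~\eqref{MASh5D_shadow_generated_covering}, \eqref{lambda_recursion_operator_for_MASh5D} (resp. also~\eqref{extended_MASh5D_shadow_generated_covering}, \eqref{lambda_recursion_operator_for_adjoint_linearized_MASh5D}), and verify that each of $\theta$, $\tilde\theta$ (resp. $(\theta,\omega)$, $(\tilde\theta,\tilde\omega)$) is a shadow of a nonlocal symmetry. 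Then the two projections $\mathcal{W}\to\mathcal{TE}$ sending the (odd) fibre coordinate of $\mathcal{TE}$ to $\theta$ and to $\tilde\theta$ are coverings, and the triple $\mathcal{TE}\leftarrow\mathcal{W}\rightarrow\mathcal{TE}$ is the B\"{a}cklund auto-transformation~$\mathcal{R}_{\kappa,\mu}$; adjoining the $\omega$-data upgrades this to $\hat{\mathcal{R}}_{\kappa,\mu}$ over the tangent covering of the cotangent extension. The hard part is exactly the claim that each Laurent coefficient of a $\lambda$-family of shadows is again a shadow of the expanded covering: because the total derivatives along~\eqref{MASh5D_shadow_generated_covering},~\eqref{extended_MASh5D_shadow_generated_covering} themselves depend on $\lambda$, this is not formal, and one must expand in $\lambda$ the identities expressing that $(\theta(\lambda),\omega(\lambda))$ solves the lifted (adjoint) linearization of~\eqref{MASh5D} and read off, power by power, that each $(\theta_n,\omega_n)$ solves the corresponding $\lambda$-free system over $\Sigma_{\kappa,\mu}$; the compatibility of $\Sigma_{\kappa,\mu}$ over~$\mathcal{E}$ needed for the projections to be coverings is inherited in the same fashion from the compatibility of~\eqref{MASh5D_shadow_generated_covering},~\eqref{extended_MASh5D_shadow_generated_covering}.

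Finally, replacing the expansion in $\lambda$ by a Laurent expansion of $\theta$ and $\omega$ in $\mu$ and running the same three steps yields Systems~\eqref{beta_recursion_operator_for_MASh5D},~\eqref{beta_recursion_operator_for_adjoint_linearized_MASh5D} and the operators $\mathcal{S}_{\kappa,\lambda}$, $\hat{\mathcal{S}}_{\kappa,\lambda}$, now carrying the remaining two parameters $\kappa$, $\lambda$. That the parameter dependence is essential, so that one truly gets two-parameter families rather than one operator in disguise, follows from the non-removability of $\kappa$, $\mu$, $\lambda$ for the Lax pair~\eqref{MASh5D_covering_Lambda} established earlier in this subsection through the symmetries $\psi_4$, $\psi_6$, $\psi_{10}$.
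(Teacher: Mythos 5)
Your proposal is correct and follows essentially the same route as the paper: the Sergyeyev-style Laurent expansion of the shadows $\theta$, $\omega$ in $\lambda$ (resp.\ $\mu$), collecting coefficients to obtain Systems~\eqref{lambda_recursion_operator_for_MASh5D}, \eqref{lambda_recursion_operator_for_adjoint_linearized_MASh5D} (resp.\ \eqref{beta_recursion_operator_for_MASh5D}, \eqref{beta_recursion_operator_for_adjoint_linearized_MASh5D}), and the observation that each coefficient is again a shadow, so the correspondences $\theta\mapsto\tilde\theta$, $(\theta,\omega)\mapsto(\tilde\theta,\tilde\omega)$ are recursion operators in the sense of the B\"{a}cklund-auto-transformation diagram. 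Your closing appeal to non-removability of the parameters is an extra (and harmless) refinement not required for the statement as formulated.
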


\begin{remark}\label{Remark1}
  From Systems~\eqref{lambda_recursion_operator_for_MASh5D} and
  \eqref{lambda_recursion_operator_for_adjoint_linearized_MASh5D} we have
  $\mathcal{R}_{\kappa,\mu} = (\kappa\,t+\mu\,z+1)\,\mathcal{R}_{0,0}
  +(\kappa\,y+\mu\,x)\,\mathbf{1}$ and
  $\hat{\mathcal{R}}_{\kappa,\mu} =
  (\kappa\,t+\mu\,z+1)\,\hat{\mathcal{R}}_{0,0}
  +(\kappa\,y+\mu\,x)\,\mathbf{1}$, where $\mathbf{1}$ is the identical map on
  the spaces of shadows of Eq.~\eqref{MASh5D} and its cotangent extension.
  \hfill $\diamond$
\end{remark}

\begin{remark}
  The recursion operators produced by expanding System
  \eqref{MASh5D_shadow_generated_covering},
  \eqref{extended_MASh5D_shadow_generated_covering} with respect to the
  parameter $\kappa$ are the compositions of the recursion operators
  $\EuScript{S}_{\mu,\lambda}$, $\hat{\EuScript{S}}_{\mu,\lambda}$,
  respectively, with the finite symmetry $(t,x,y,z) \mapsto (z,y,x,t)$ of
  Eq.~\eqref{MASh5D}.  \hfill $\diamond$
\end{remark}

Unlike Systems~\eqref{beta_recursion_operator_for_MASh5D} and
\eqref{beta_recursion_operator_for_adjoint_linearized_MASh5D}, systems
\eqref{lambda_recursion_operator_for_MASh5D} and
\eqref{lambda_recursion_operator_for_adjoint_linearized_MASh5D} define Abelian
coverings over Eqs. \eqref{MASh5D} and \eqref{adjoint_linearized_MASh5D}.
This allows one to use the results of
\cite{KrasilshchikVerbovetskiy2022,Krasilshchik2022} to prove the following

\begin{proposition}\label{sec:recursion-operators-1}
  The recursion operator $\EuScript{R}_{\kappa,\mu}$ is hereditary for each
  choice of the parameters $\kappa$\textup{,} $\mu$.  Two such operators with
  different values of the parameters $\kappa$, $\mu$ are compatible\textup{,}
  i.e.\textup{,} their Nijenhuis bracket vanishes.
  
  The recursion operator $\hat{\EuScript{R}}_{\kappa,\mu}$ is hereditary for
  each choice of the parameters $\kappa$\textup{,} $\mu$.  The operators
  $\hat{\EuScript{R}}_{\kappa_1,\mu_2}$ and
  $\hat{\EuScript{R}}_{\kappa_2,\mu_2}$ are compatible if and only if
  $\kappa_1 = \kappa_2$ and $\mu_1=\mu_2$.
\end{proposition}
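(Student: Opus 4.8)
The plan is to reduce everything to the two base operators $\mathcal{R}_{0,0}$ and $\hat{\mathcal{R}}_{0,0}$, to import their heredity from the Abelian-covering theory of \cite{KrasilshchikVerbovetskiy2022, Krasilshchik2022}, and to propagate along the affine families of Remark~\ref{Remark1}. The decisive structural input is that Systems~\eqref{lambda_recursion_operator_for_MASh5D} and \eqref{lambda_recursion_operator_for_adjoint_linearized_MASh5D} are \emph{Abelian} coverings over \eqref{MASh5D} and \eqref{adjoint_linearized_MASh5D}: the new nonlocal variables $\tilde\theta$, $\tilde\omega$ enter the defining relations only through their own $x$- and $y$-derivatives. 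For Abelian coverings over $\mathcal{TE}$ the cited references supply both the regularity of the operators (so the Nijenhuis bracket of the excerpt is defined) and a workable recipe for it: the bracket $\ldb\phi_1,\phi_2\rdb$ is evaluated by solving the linear defining equations for the action of the symmetries $\Ev_{\phi_1}$, $\Ev_{\phi_2}$ on the nonlocal variables on a common covering, and in the Abelian case their heredity criterion applies directly, giving $\ldb\mathcal{R}_{0,0},\mathcal{R}_{0,0}\rdb=0$ and $\ldb\hat{\mathcal{R}}_{0,0},\hat{\mathcal{R}}_{0,0}\rdb=0$.

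Next I would deduce the assertions about $\mathcal{R}_{\kappa,\mu}$ from the relation $\mathcal{R}_{\kappa,\mu}=(\kappa t+\mu z+1)\,\mathcal{R}_{0,0}+(\kappa y+\mu x)\,\mathbf{1}$ of Remark~\ref{Remark1}. The Nijenhuis bracket is bilinear, obeys a Leibniz rule with respect to multiplication of an operator by a function, and vanishes whenever one argument is $\mathbf{1}$ (the identity B\"acklund transformation is parallel). Expanding $\ldb\mathcal{R}_{\kappa_1,\mu_1},\mathcal{R}_{\kappa_2,\mu_2}\rdb$, the only potentially surviving contributions are the term with both factors equal to $\mathcal{R}_{0,0}$ and the Leibniz terms in which a total derivative hits a coefficient. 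Using that $\mathcal{R}_{0,0}$ interchanges the $(x,z)$- and $(y,t)$-directions of differentiation — so that $D_x,D_y$ applied to $\kappa y+\mu x$ produce the same constants $\mu,\kappa$ that $D_z,D_t$ applied to $\kappa t+\mu z+1$ produce — these Leibniz terms cancel pairwise, and what remains is $(\kappa_1 t+\mu_1 z+1)(\kappa_2 t+\mu_2 z+1)\ldb\mathcal{R}_{0,0},\mathcal{R}_{0,0}\rdb=0$. Taking $(\kappa_1,\mu_1)=(\kappa_2,\mu_2)$ gives heredity of $\mathcal{R}_{\kappa,\mu}$, the general case gives pairwise compatibility, and the same computation run on $\mathcal{TE}$ with the $\hat{\mathcal{R}}$-version of Remark~\ref{Remark1} gives heredity of $\hat{\mathcal{R}}_{\kappa,\mu}$.

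The subtle part is the necessity in the last claim. For the cotangent operators the $\theta$- and $\omega$-blocks are genuinely coupled — through the terms $2v_y\theta_s+v_{ys}\theta$ and $2v_x\theta_s+v_{xs}\theta$ in \eqref{lambda_recursion_operator_for_adjoint_linearized_MASh5D} — and this coupling breaks the pairwise cancellation above. Carrying the same expansion through, $\ldb\hat{\mathcal{R}}_{\kappa_1,\mu_1},\hat{\mathcal{R}}_{\kappa_2,\mu_2}\rdb$ acquires a residual summand that is linear in $(\kappa_1-\kappa_2,\ \mu_1-\mu_2)$ and carries an explicit factor assembled from the $v$-dependent coupling terms; it vanishes precisely when $\kappa_1=\kappa_2$ and $\mu_1=\mu_2$, which reproves heredity in the diagonal case. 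To establish that it is nonzero otherwise — ruling out an accidental cancellation — I would evaluate the residual term on one explicit shadow, for instance one induced by a point symmetry of \eqref{adjoint_linearized_MASh5D} taken from Proposition~1.

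I expect the main obstacle to be exactly this non-vanishing step and the bookkeeping underlying it: one must first solve the linear but bulky systems describing the action of $\Ev_{\phi_1}$, $\Ev_{\phi_2}$ on the nonlocal variables $\tilde\theta$, $\tilde\omega$ before the residual term can be written down precisely, and only then can a test shadow be chosen so as to make its non-vanishing transparent. Everything else — bilinearity and the Leibniz rule for the Nijenhuis bracket, transparency of $\mathbf{1}$, the pairwise cancellations for $\mathcal{R}_{\kappa,\mu}$ — is routine once the heredity of $\mathcal{R}_{0,0}$ and $\hat{\mathcal{R}}_{0,0}$ has been imported from \cite{KrasilshchikVerbovetskiy2022, Krasilshchik2022}.
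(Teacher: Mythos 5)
Your proposal does not follow the paper's route, and the two places where it diverges are precisely where it has gaps. First, you treat the heredity of $\mathcal{R}_{0,0}$ and $\hat{\mathcal{R}}_{0,0}$ as a statement to be imported from \cite{KrasilshchikVerbovetskiy2022,Krasilshchik2022}; but those papers concern the rdDym and Pavlov--Mikhalev equations, and what transfers is only the \emph{technique}. The substantive content of the paper's proof is exactly the work you defer: one must establish regularity by exhibiting explicit lifts of the relevant shadows to symmetries of the covering systems, and then compute the super-commutators. Concretely, the paper introduces the auxiliary nonlocal variables $\eta$, $\zeta$ of \eqref{eta_covering}, \eqref{zeta_covering}, the shadows $U=(\kappa\,t+\mu\,z+1)\,u_s\,\theta-\eta$ and $V=(\kappa\,t+\mu\,z+1)\,(v_s\,\theta-2\,u_s\,\omega)-\zeta$, lifts them to symmetries $\Phi_{\kappa,\mu}$, $\widehat{\Phi}_{\kappa,\mu}$ of \eqref{lambda_recursion_operator_for_MASh5D}, \eqref{lambda_recursion_operator_for_adjoint_linearized_MASh5D} with the explicit actions \eqref{UV_lift}, and then computes $\ldb\mathbf{E}_{\Phi_{\kappa_1,\mu_1}},\mathbf{E}_{\Phi_{\kappa_2,\mu_2}}\rdb$ and $\ldb\mathbf{E}_{\widehat{\Phi}_{\kappa_1,\mu_1}},\mathbf{E}_{\widehat{\Phi}_{\kappa_2,\mu_2}}\rdb$ directly for \emph{arbitrary} pairs of parameters; nothing is reduced to the $(0,0)$ case.

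Second, the affine-propagation step via Remark~\ref{Remark1} is the step that would fail. The Leibniz-type expansion of the Nijenhuis bracket under multiplication by the nonconstant coefficients $(\kappa\,t+\mu\,z+1)$ and $(\kappa\,y+\mu\,x)$ is asserted, not proved (a function times $\mathbf{1}$ is not ``parallel''; the derivative-of-coefficient terms are exactly what is at stake), and your cancellation argument cannot distinguish the two families: the decomposition of Remark~\ref{Remark1} has the identical form for $\hat{\mathcal{R}}_{\kappa,\mu}$, while the coupling terms $2\,v_y\,\theta_s+v_{ys}\,\theta$, $2\,v_x\,\theta_s+v_{xs}\,\theta$ you blame are parameter-independent and already present in $\hat{\mathcal{R}}_{0,0}$, so within your scheme they cannot produce a residual proportional to the parameter differences. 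Indeed your predicted obstruction (linear in $(\kappa_1-\kappa_2,\mu_1-\mu_2)$) is wrong: the paper's computation yields a generating section whose only nonzero component is $-6\,((\kappa_1-\kappa_2)\,t+(\mu_1-\mu_2)\,z)^2\,u_s\,\omega\wedge\theta_s$, i.e.\ quadratic in the differences, and its non-vanishing for $(\kappa_1,\mu_1)\neq(\kappa_2,\mu_2)$ is read off immediately from the formula, with no need to evaluate on a test shadow. In short, without carrying out the explicit lift-and-commute computation there is no way to write down the residual term at all, and the scaffolding you propose around it (imported heredity, Leibniz cancellation, linear residual) is either unavailable or incorrect.
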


\begin{proof}
  We introduce two functions $U = (\kappa\,t+\mu\,z+1)\,u_s\,\theta-\eta$ and
  $V = (\kappa\,t+\mu\,z+1)\,(v_s\,\theta-2\,u_s\,\omega)-\zeta$, where $\eta$
  and $\zeta$ are the nonlocal variables defined by the systems
  \begin{equation}    \label{eta_covering}
    \begin{array}{lcl}
      \eta_x &=& (\kappa\,t+\mu\,z+1)\,(\theta_z+u_s\,\theta_x+u_x\,\theta_s)
                 +(\kappa\,y+\mu\,x)\,\theta_x+\mu\,\theta,
      \\
      \eta_y &=& (\kappa\,t+\mu\,z+1)\,(\theta_t+u_s\,\theta_y+u_y\,\theta_s)
                 +(\kappa\,y+\mu\,x)\,\theta_y+\kappa\,\theta.
    \end{array}
  \end{equation}
  and
  \begin{equation}\label{zeta_covering}
    \begin{array}{lcl}
      \zeta_x
      &=& (\kappa\,t+\mu\,z+1)\,(\omega _z+
          u_x\,\omega_s-2\,u_s\,\omega_x+v_s\,\theta_x-2\,v_x\,\theta_s) 
      \\
      &&
         \qquad
         +(\kappa\,y+\mu\,x)\,\omega_x+\mu\,\omega,
      \\
      \zeta_y
      &=& (\kappa\,t+\mu\,z+1)\,(\omega_t+u_y\,\omega_s-2\,u_s\,\omega_y+
          v_s\,\theta_y-2\,v_y\,\theta_s) 
      \\
      &&
         \qquad
         +(\kappa\,y+\mu\,x)\,\omega_y+\kappa\,\omega.
    \end{array}    
  \end{equation}
  Here $\omega$, $\theta$, $\eta$, and $\zeta$ are understood as odd variables in the fibers of
  the tangent covering $\mathcal{T}\tilde{\mathcal{E}} \to
  \tilde{\mathcal{E}}$ and the symbol $\wedge$ below denotes their
  anti-commutative multiplication.
  
  Then $U$ is a shadow of a symmetry of Equation~\eqref{MASh5D}, while the
  pair $(U,V)$ is a shadow of a symmetry for System~\eqref{MASh5D},
  \eqref{adjoint_linearized_MASh5D}. By the technique used in the proof of
  Proposition 3 from~\cite{Krasilshchik2022} we obtain

  \begin{lemma}
    The shadow $U$ admits the lift $\Phi_{\kappa,\mu}$ to a symmetry of
    System~ \eqref{lambda_recursion_operator_for_MASh5D}.  The shadow $(U,V)$
    has the lift $\widehat{\Phi}_{\kappa,\mu}$ to a symmetry of
    System~\eqref{lambda_recursion_operator_for_MASh5D},
    \eqref{lambda_recursion_operator_for_adjoint_linearized_MASh5D}.  For the
    associated evolutionary vector fields $\mathbf{E}_{\Phi_{\kappa,\mu}}$ and
    $\mathbf{E}_{\widehat{\Phi}_{\kappa,\mu}}$ we have
    \begin{equation}\label{UV_lift}
      \begin{array}{lcl}
        \Theta
        &=&\mathbf{E}_{\Phi_{\kappa,\mu}}(\theta) =
            (\kappa\,t+\mu\,z+1)\,\theta \wedge \theta_s, 
        \\
        \Omega
        &=&\mathbf{E}_{\widehat{\Phi}_{\kappa,\mu}} (\omega) = 
            (\kappa\,t+\mu\,z+1)\,(\theta \wedge \omega_s+2\theta_s \wedge
            \omega), 
        \\
        H &=& \mathbf{E}_{\Phi_{\kappa,\mu}}(\eta) =
              (\kappa\,t+\mu\,z+1)\,\theta \wedge \eta_s, 
        \\
        Z
        &=& \mathbf{E}_{\widehat{\Phi}_{\kappa,\mu}}(\zeta)= 
            (\kappa\,t+\mu\,z+1)\,(\theta \wedge \zeta_s+2\,\eta_s \wedge \omega) 
        \\
        && 
           \qquad\qquad
           +6\,(\kappa\,t+\mu\,z+1)^2\,u_s\,\omega \wedge \theta_s.
      \end{array}
    \end{equation}
  \end{lemma}
  \begin{proof}[Proof of Lemma]
    Straightforward computations.
  \end{proof}
  To finish the prove of Proposition~\ref{sec:recursion-operators-1} we use
  formulas \eqref{UV_lift} to compute
  \begin{equation*}
    [\![\mathbf{E}_{\Phi_{\kappa_1,\mu_1}}, \mathbf{E}_{\Phi_{\kappa_2,\mu_2}}]\!] =0
  \end{equation*}
  and
  \begin{equation*}
    [\![\mathbf{E}_{\widehat{\Phi}_{\kappa_1,\mu_1}},
    \mathbf{E}_{\widehat{\Phi}_{\kappa_2,\mu_2}}]\!]   
    =\mathbf{E}_{\Psi}
  \end{equation*}
  with
  \begin{equation*}
    \Psi = 
    \left(
      \begin{array}{c}
        0
        \\
        -6\,((\kappa_1-\kappa_2)\,t+(\mu_1-\mu_2)\,z)^2\,u_s\,\omega\wedge
        \theta_s 
        \\
        0
        \\
        0
      \end{array}
    \right).
  \end{equation*}
The last equation shows that $[\![\mathbf{E}_{\widehat{\Phi}_{\kappa_1,\mu_1}},
    \mathbf{E}_{\widehat{\Phi}_{\kappa_2,\mu_2}}]\!] =0$ if and only if $\kappa_1=\kappa_2$ and 
$\mu_1=\mu_2$.
\end{proof}

\subsection{Actions}\label{sec:actions}

In this Subsection we study the actions of the recursion operators from
Proposition~\ref{sec:recursion-operators} in the simplest cases. Namely, we
consider the actions of the operators $\hat{\mathcal{R}}_{\kappa,\mu}$ and
$\mathcal{R}_{0,0}^{-1}$.  The results of
\cite{BKMV2016,BKMV2018,KrasilshchikMorozovVojcak2019,Vojcak-2022} show that
the adequate setting for studying the actions of the operators
$\mathcal{R}_{\kappa,\mu}^{-1}$, $\hat{\mathcal{R}}_{\kappa,\mu}^{-1}$,
$\mathcal{S}_{\kappa,\lambda}$, $\hat{\mathcal{S}}_{\kappa,\lambda}$,
$\mathcal{S}_{\kappa,\lambda}^{-1}$, and
$\hat{\mathcal{S}}_{\kappa,\lambda}^{-1}$ should include consideration of
nonlocal symmetries in various coverings.  We intent to deal with this issue
in the forthcoming research.

As usual, the image of zero under the action of a recursion operator is non-trivial. Below the actions of the recursion operators are computed modulo images of zero.

According to Remark \ref{Remark1} one can readily express the action of the operators $\hat{\mathcal{R}}_{\kappa,\mu}$
in terms of the action of $\hat{\mathcal{R}}_{0,0}$.  The direct computations show that the map defined the last recursion operator is given by the following formulas, where the nonlocal variables $p$ and $r$ were defined 
in Subsection \ref{sec:lax-representations-1}:
\begin{equation*}
  \fl
  \begin{array}{lcl}
    \psi_1 &\mapsto  & (2\,p-y\,u_t-u\,u_s-x\,u_z, r-y\,v_t-u\,v_s-x\,v_z), 
    \\
    \psi_2 &\mapsto    & (t\,p_t+2\,z\,p_z+2\,p-
                         u_s\,(t\,u_t+2\,z\,u_z+u)+x\,u_z,  
\\
           &&
              t\,r_t+2\,z\,r_z+r
              +2\,u_s\,(t\,v_t+2\,z\,v_z)
    \\ &&-v_s\,(t\,u_t+2\,z\,u_z+u)
          +x\,v_z
          ),
    \\
    \psi_3 &\mapsto   & ( t\,p_t-z\,p_z-x\,u_z+y\,u_t-u_s\,(t\,u_t-z\,u_z), 
    \\
           &&
              t\,r_t-z\,r_z-x\,v_z+y\,v_t
              +2\,u_s\,(t\,v_t-z\,v_z)-v_s\,(t\,u_t-z\,u_z)
              ), 
    \\
    \psi_4 &\mapsto & (z^2\,p_z+t\,z\,p_t+2\,z\,p
                      +(x-z\,u_s)\,(t\,u_t\,u_s+z\,u_z+u), 
    \\
           &&
              z^2\,r_z+t\,z\,r_t+2\,z\,r
              +(x+2\,z\,u_s)\,(t\,v_t+z\,v_z+v)
    \\
           &&
              -z\,v_s\,(t\,u_t+z\,u_z+u)), 
    \\
    \psi_5 &\mapsto & (t\,p_z-t\,u_z\,u_s+y\,u_z,
                      t\,r_z+y\,v_z-t\,u_z\,v_s+2\,t\,u_s\,v_z),  
    \\
    \psi_6 &\mapsto & (-t^2\,p_t-t\,z\,p_z-2\,t\,p
                      +(t\,u_s-y)\,(t\,u_t+z\,u_z+u),
    \\
           &&
              -t\,(t\,r_t+z\,r_z+2\,r)
              -(y+2\,t\,u_s)\,(t\,v_t+z\,v_z+v)
    \\
           &&
              +t\,v_s\,(t\,u_t+z\,u_z+u)),
    \\
    \psi_7 &\mapsto & (-p_z+u_z\,u_s, -r_z-2\,u_s\,v_z+u_z\,v_s), 
    \\
    \psi_8 &\mapsto & (z\,p_t-z\,u_t\,u_s+x\,u_t,
                      z\,r_t+2\,z\,u_s\,v_t+x\,v_t-z\,u_t\,v_s),  
    \\
    \psi_9 &\mapsto & (p_t-u_t\,u_s, r_t+2\,u_s\,v_t-u_t\,v_s), 
    \\
    \psi_{10} &\mapsto & ( t\,u_t+z\,u_z+u, t\,v_t+z\,v_z+v), 
    \\
    \psi_{11} &\mapsto & (-u_z, -v_z)=-\psi_7, 
    \\
    \psi_{12} &\mapsto & (u_t, v_t)=-\psi_9, 
    \\
    \psi_{13} &\mapsto & \left(0,
                         p_{sss}-u_s\,u_{sss}-\frac{3}{2}\,u_{ss}^2\right),  
    \\
    \psi_{14} &\mapsto & (0, -r-2\,u_s\,v),
    \\
    \phi_{0,0}(A) &\mapsto &
                             (A\,p_s-A_s\,p
                             +u_s\,(A_s\,u+A_t\,y+A_z\,x-A\,u_s)
    \\
           &&
              -\frac{1}{2}\,(u\,(A_{ss}\,u+2\,A_{zs}\,x+A_{ts}\,y)
              +A_{zz}\,x^2+2\,A_{tz}\,x\,y+A_{tt}\,y^2),
    \\
           &&
              A\,r_s+2\,A_s\,r
              +v_s\,(A\,u_s+A_s\,u+A_t\,y+A_z\,x\,v_s
              +A\,u_s\,v_s)
    \\
           &&
              +2\,v\,(2\,A_s\,u_s+A_{ss}\,u+A_{ts}\,y+A_{zs}\,x), 
    \\
    \phi_{0,1}(A) &\mapsto &(A\,u_s-A_s\,u-A_t\,y-A_z\,x, A\,v_s+2\,A_s\,v), 
    \\
    \phi_{1,0}(A) &\mapsto &(0, 
                             -2\,A\,p_s-A_s\,(p+2\,u\,u_s)
                             -u_s\,(A\,u_s+2\,A_t\,y+2\,A_z\,x)
    \\
           &&
              -\frac{1}{2}\,(u\,(A_{ss}\,u+2\,A_{ts}\,y+2\,A_{zs})
              +A_{zz}\,x^2+2\,A_{tz}\,x\,y+A_{tt}\,y^2)), 
    \\
    \phi_{1,1}(A) &\mapsto &(0, -2\,A\,u_s-A_s\,u-A_t\,y-A_z\,x).  
  \end{array}
\end{equation*}
The action of the recursion operator $\mathcal{R}_{0,0}$ is given by the first
components of the above formulas.

The action of the operator $\mathcal{R}_{0,0}^{-1}\circ \mathrm{pr}_u$, where
$\mathrm{pr}_u$ `forgets' the $v$-component of shadows, is presented in terms
of the nonlocal variables $w_0$ and $w_1$ as follows\footnote{Most of these
  formulas were obtained in \cite{BKMV2015}}:
\begin{equation*}
  \begin{array}{lcl}
    \psi_1 &\mapsto  & 
                       (x\,w_{0,x}+y\,w_{0,y})\,w_{0,s}^{-1}, 
    \\
    \psi_2 &\mapsto  &
                       t\,u_y+2\,z\,u_x-x\,w_{0,x}\,w_{0,s}^{-1}, 
    \\
    \psi_3 &\mapsto  &
                       t\,u_y-z\,u_x+(x\,w_{0,x}-y\,w_{0,y})\,w_{0,s}^{-1}, 
    \\
    \psi_4 &\mapsto  &
                       x\,(w_1-z\,w_{0,x}-t\,w_{0,y})
                       \,w_{0,s}^{-1}+z\,(z\,u_x+t\,u_y),   
    \\
    \psi_5 &\mapsto  &
                       t\,u_x-y\,w_{0,x}\,w_{0,s}^{-1}, 
    \\
    \psi_6 &\mapsto  &
                       -y\,(w_1-z\,w_{0,x}-y\,w_{0,y})
                       \,w_{0,s}^{-1}-t^2\,u_y-t\,z\,u_x,  
    \\
    \psi_7 &\mapsto  &
                       -u_x,
    \\
    \psi_8 &\mapsto  &
                       z\,u_y-x\,w_{0,y}\,w_{0,s}^{-1}, 
    \\
    \psi_9 &\mapsto  &
                       u_y,
    \\
    \psi_{10} &\mapsto  &
                          (w_1-z\,w_{0,x}-t\,w_{0,y})\,w_{0,s}^{-1}, 
    \\
    \psi_{11} &\mapsto  &
                          w_{0,x}\,w_{0,s}^{-1}, 
    \\
    \psi_{12} &\mapsto  &
                          -w_{0,y}\,w_{0,s}^{-1},
    \\
    \varphi_{0,0} (A) &\mapsto&  A,
    \\
    \varphi_{0,1} (A) &\mapsto&  0. 
  \end{array}
\end{equation*}

\subsection{Higher order Lax representations}\label{sec:higher-order-lax}

The symmetry $\psi_{13} =(0,-u_{sss})$ has no lift to the Lax representation
\eqref{extended_MASh5D_shadow_generated_covering}.  We can use this fact to
produce new Lax representations for Eq.~\eqref{MASh5D} and its cotangent
covering.  Indeed, the flow of the prolongation of the vector field
$u_{sss}\,\partial_v$ transforms System
\eqref{extended_MASh5D_shadow_generated_covering} to the system
\begin{equation}\label{higher_adjoint_linearized_MASh5D_covering}
  \begin{array}{lcl}
    \omega_t
    &=& \displaystyle{\Lambda\,\omega_y-u_y\,\omega_s
                 -\left(2\,u_{ys}+\frac{\kappa}{\kappa\,t
        +\mu\,z+1}\right)\,\omega} 
    \\
    &&
       \qquad
       +2\,(v_y+\,u_{ysss})\,\theta_s+(v_{ys}+u_{yssss})\,\theta,
    \\
    \omega_z &=& \displaystyle{\Lambda\,\omega_x-u_x\,\omega_s
                 -\left(2\,u_{xs}+\frac{\mu}{\kappa\,t+\mu\,z+1}\right)\,\omega}
    \\
    &&
       \qquad
       +2\,(v_x+\,u_{xsss})\,\theta_s+(v_{xs}+u_{xssss})\,\theta.
  \end{array}
\end{equation}
System \eqref{MASh5D_shadow_generated_covering},
\eqref{higher_adjoint_linearized_MASh5D_covering} defines a two-parametric
family of higher order Lax representations for System~\eqref{MASh5D},
\eqref{adjoint_linearized_MASh5D}. If we put $v=0$ in
\eqref{higher_adjoint_linearized_MASh5D_covering}, we obtain the system
\begin{equation}\label{higher_MASh5D_covering}
  \begin{array}{lcl}
    \omega_t
    &=& \displaystyle{\Lambda\,\omega_y-u_y\,\omega_s
        -\left(2\,u_{ys}+\frac{\kappa}{\kappa\,t+\mu\,z+1}\right)\,\omega}
    \\
    &&
       \qquad
       +2\,\,u_{ysss}\,\theta_s+\,u_{yssss}\,\theta,
    \\
    \omega_z
    &=& \displaystyle{\Lambda\,\omega_x-u_x\,\omega_s
        -\left(2\,u_{xs}+\frac{\mu}{\kappa\,t+\mu\,z+1}\right)\,\omega}
    \\
    &&
       \qquad
       +2\,\,u_{xsss}\,\theta_s+\,u_{xssss}\,\theta.
  \end{array}
\end{equation}
System \eqref{MASh5D_shadow_generated_covering},
\eqref{higher_MASh5D_covering} is compatible by virtue of Eq.~\eqref{MASh5D}
alone and defines a two-parametric family of higher order Lax representations
for this equation.

Likewise, the action of the flow of $u_{sss}\,\partial_v$ on System
\eqref{MASh5D_lift_of_main_covering} produces the systems
\begin{equation}\label{higher_extension_of_MASh5D_lift_of_main_covering_1}
  \begin{array}{lcl}
    W_t &=& \lambda\,W_y-u_y\,W_s -3 \,u_{ys} \,W+(v_y+u_{ysss})\,w_s,
    \\
    W_z &=& \lambda\,W_x-u_x\,W_s-3\,u_{xs}\,W+(v_x+u_{xsss})\,w_s
  \end{array}
\end{equation}
and
\begin{equation}\label{higher_extension_of_MASh5D_lift_of_main_covering_2}
  \begin{array}{lcl}
    W_t &=& \lambda\,W_y-u_y\,W_s -3 \,u_{ys} \,W+u_{ysss}\,w_s,
    \\
    W_z &=& \lambda\,W_x-u_x\,W_s-3\,u_{xs}\,W+u_{xsss}\,w_s.
  \end{array}
\end{equation}
These systems together with~\eqref{MASh5D_main_covering} give one-parametric
families of higher order Lax representations for System
\eqref{MASh5D}\textup{,} \eqref{adjoint_linearized_MASh5D} and
Eq. \eqref{MASh5D}, respectively.

\begin{remark}\label{sec:higher-order-lax-1}
  Generally speaking, higher symmetries, contrary to classical (contact) ones,
  cannot be used to insert a parameter to Lax pairs (coverings), since they do
  not possess trajectories. But in the case of~$\psi_{13}$ the result
  of~\cite{KrasilshchikVinogradov1989} remains valid, because the first
  component of the symmetry at hand vanished, while the second one is
  independent of~$u$. Of course, this is true for all symmetries of such a
  type.\hfill $\diamond$
\end{remark}

\section{Conclusions}\label{sec:concluding-remarks}

Let us conclude our exposition with the following remarks.
\begin{enumerate}
\item Since the cotangent equation $\tilde{\mathcal{E}} =
  \mathcal{T}^*\mathcal{E}$ is a Lagrangian one, one has
  $\ell_{\tilde{\mathcal{E}}} = \ell_{\tilde{\mathcal{E}}}^*$ and thus the
  spaces of symmetries and cosymmetries for~$\tilde{\mathcal{E}}$
  coincide. Therefore, the recursion operators found above are good candidates
  for Hamiltonian structures.
\item Contrary to many other examples, the 5D Mart{\'{\i}}nez Alonso--Shabat
  equation admits a rich family of recursion operators. It would interesting
  to describe the group structure of this family.
\item It is generally accepted that nonlinear multi-dimensional equations do
  not possess higher symmetries. Our experience shows that cotangent equations
  deliver a `regular' counter-example to this statement. What is the reason of
  this phenomenon and what is the role of higher symmetries in geometry of
  multi-dimensional systems?
\item As it was noticed in Section~\ref{sec:introduction}, a number of Lax
  integrable systems are obtained as symmetry reduction of our equation or are
  related to it by B\"{a}cklund transformations. What is the behavior of
  various invariants (symmetry algebras, recursion operators, Lax pairs, etc.)
  under these reduction and/or relations?
\item Finally, it is interesting to compare the obtained results with the
  invariants of other  multi-dimensional equations.
\end{enumerate}
These and other problems are subjects of future research. 

\section*{Acknowledgments}
Computations were supported by the {\sc Jets} software, \cite{Jets}.

\end{document}